\documentclass[conference]{IEEEtran}
\IEEEoverridecommandlockouts
\usepackage[dvipsnames]{xcolor}
\usepackage{amsmath,amsthm,amsfonts}
\usepackage[noEnd=true, rightComments=false]{algpseudocodex}
\usepackage{graphicx,subcaption}
\usepackage{xspace}
\usepackage{textcomp}
\usepackage{algorithm}
\usepackage{enumitem}
\usepackage{dsfont}
\usepackage{diagbox}
\usepackage{array}
\usepackage{flushend}
\usepackage [
    n, 
    advantage,
    operators,
    adversary,
    landau,
    probability,
    notions,
    ff, 
    mm,
    primitives,
    events,
    complexity,
    oracles,
    asymptotics,
    keys
]{cryptocode}
\usepackage{todonotes}
\usepackage{mathrsfs}
\usepackage{tikz}
\usepackage{url}
\usepackage{multirow}
\usepackage{caption}
\usepackage{balance}
\usetikzlibrary{arrows}
\usepackage{hyperref}
\usepackage{cleveref}
\usepackage{cite}
\usepackage[most]{tcolorbox}

\graphicspath{{figure/}}

\def\BibTeX{{\rm B\kern-.05em{\sc i\kern-.025em b}\kern-.08em
    T\kern-.1667em\lower.7ex\hbox{E}\kern-.125emX}}

\def\eg{\textit{e.g.}\xspace}
\def\aka{\textit{a.k.a.}\xspace}
\def\ie{\textit{i.e.}\xspace}
\def\etc{\textit{etc.}\xspace}
\def\etal{\textit{et~al.}\xspace}

\newcommand{\sys}{\textsf{Appraisal}}
\newcommand{\true}{\mathsf{true}}

\newtheorem{theorem}{Theorem}[section]

\theoremstyle{definition}
\newtheorem{definition}{Definition}[section]
\theoremstyle{remark}

\begin{document}

\title{Privacy-Preserving Screening for Record Linkage \\
\thanks{Fan Zhang and Peng Chen are corresponding authors.}
}

\author{\IEEEauthorblockN{
Chenyu Huang\IEEEauthorrefmark{1}\IEEEauthorrefmark{2},
Fan Zhang\IEEEauthorrefmark{1}\IEEEauthorrefmark{2}, 
Huangxun Chen\IEEEauthorrefmark{3}, 
Yongjun Zhao\IEEEauthorrefmark{4}, 
Huaming Rao\IEEEauthorrefmark{1},
Peng Chen\IEEEauthorrefmark{1}, Danqing Huang\IEEEauthorrefmark{1}
}
\IEEEauthorblockA{
\IEEEauthorrefmark{1}Tencent Inc,
\IEEEauthorrefmark{3}Hong Kong University of Science and Technology (GZ), 
\IEEEauthorrefmark{4}Independent Researcher
\IEEEauthorrefmark{2}Co-primary Authors
}

\IEEEauthorblockA{
\{chenyuhuang, zxfanzhang\}@tencent.com, huangxunchen@hkust-gz.edu.cn, \\foreverjun.zhao@gmail.com, 
\{huamingrao, pengchen, daisyqhuang\}@tencent.com
}
}








\maketitle
\begin{abstract}

In an era dominated by big data and machine learning, establishing valuable data collaboration has never been more critical. 
However, such collaborations must operate under regulatory and legal constraints. 
Two-party Privacy-Preserving Record Linkage (PPRL) emerges to assess the potential collaboration value and also ensure the privacy and security of the involved data. 
Nevertheless, the substantial computational and communication overheads associated with PPRL hinder its practical adoption in data markets with numerous potential collaborators. 
Therefore, we present the \emph{Screening-then-Linkage} framework, which incorporates a lightweight \emph{Screening} phase prior to the resource-intensive PPRL phase, \ie, PPRS, to mitigate the scalability issue of PPRL.
We propose a circuit-PSI-based system, named \sys~to realize a secure, effective, and efficient PPRS. 
To reconcile the approximate matching and/or schema-aware setting required in PPRS with the limitations of the circuit-PSI supporting only symmetric functions, we propose a more communication-efficient secure permutation, \ie, Oblivious Attribute/Feature Alignment protocol tailored for PPRS. 
This protocol supports a broader range of comparison functions and significantly improves efficiency, \ie, reducing communication costs by a factor of 14 compared to the conventional protocol. 
Our rigorous analysis and comprehensive empirical evaluations demonstrate the security, effectiveness, and efficiency of \sys. 
\sys~can accommodate up to $850\times$ more records than the SOTA PPRS system, SFour, within the same constraints.
Moreover, it is $165 \times$ faster than SOTA PPRL, indicating the \emph{Screening-then-Linkage} framework substantially decreases the computation time required to identify the most valuable collaborators from a large pool of candidates.

\end{abstract}

\begin{IEEEkeywords}
Multiparty Computation, Privacy-preserving Record Linkage, Secure Permutation, Private Set Intersection.
\end{IEEEkeywords}

\section{Introduction}
\label{sec:introduction}

Empowered by the progress of machine learning techniques,
organizations endeavor to acquire more valuable data to enhance their analytical and predictive models~\cite{asudeh2022towards, wang2018deeper, hardy2017private, nock2021impact} in sectors like finance~\cite{Fincom}, healthcare~\cite{kuehni2012cohort, datavantmatch}~\etc
These trends stimulate the growth of data federation markets, where data owners can discover potential collaborators or engage in data trading. 

An indispensable operation in data markets is to assess the~\emph{collaboration value} across multiple datasets or parties.

\noindent$\bullet$ 
\emph{Example of Advertisement:}
Amazon Publisher Service~\cite{awspublisherservice} provides a platform that enables advertisers and publishers to access comprehensive data from both sides which helps them optimize advertising planning and investment.
Advertisers need to select the most suitable publisher for an advertiser, \ie, one that maximizes profit, and vice versa.
By calculating the match rate (collaboration value), the percentage of common customers, via record linkage service~\cite{awsrecordlinkage}, both parties can make more informed decisions, thereby enhancing the effectiveness of their advertising campaigns.

\noindent$\bullet$ 
\emph{Example of Healthcare:} 
Datavant~\cite{datavant}, the leading health data marketplace, has records from over 70K hospitals and clinics.
Its platform facilitates hospitals in finding appropriate datasets via its record linkage service~\cite{datavantrecordlinkage}.
A particularly challenging area is rare diseases, where it’s difficult to find patients.
Pharmaceutical companies invest significant time linking their patients with rare diseases to various datasets~\cite{datavantraredisease, conf/asiaccs/WuVKR23}.
Unfortunately, their efforts often yield no results due to the small number of patients.
Therefore, they are eager to find an efficient way to determine whether the datasets contain the patients they are interested in.


In general, the collaboration value may vary on a case-by-case basis and is highly related to the statistics of the linked records, such as the quantity/ratio of linked/unlinked data. 
Meanwhile, to foster a prosperous data market under the data regulations and privacy laws, such as General Data Protection Regulation and China's Personal Information Protection Law, we need a secure, effective, and efficient way to estimate the collaboration value for multiple involved parties, \ie, plaintext-based record linkage is not allowed. 
Specifically, it must ensure that no information is disclosed except for the estimation result (\emph{security}); it should provide accurate estimation (\emph{effectiveness}); and it should run efficiently to swiftly identify the most valuable dataset among massive candidates (\emph{efficiency}).

Numerous two-party \textbf{P}rivacy-\textbf{P}reserving \textbf{R}ecord \textbf{L}inkage (PPRL) methods~\cite{wong2013privacy, essex2019secure,phua2010resilient, gkoulalas2021modern, khurram2020sfour, adir2022privacy} have been proposed
. 
Among these, Multi-party Computation (MPC)-based PPRL~\cite{wong2013privacy, essex2019secure,khurram2020sfour, wei2023cryptographically} ensure security while maintaining high accuracy of linkage. However, they suffer from high computational costs. For example, the performance of SFour~\cite{khurram2020sfour} is bottlenecked by 
cryptographic sorting and comparison;
Wei~\etal~\cite{wei2023cryptographically} involves $\mathcal{O}(n\log n)$ secure comparisons.
As a result, the best solution can only link 40K records per hour~\cite{wei2023cryptographically}.
The efficiency issue becomes even more severe when dealing with a large number of potential candidates in the market.

Instead of further optimizing the efficiency of PPRL itself, 
we observe that PPRL may be an overkill in the collaborative candidate search phase. 
Before pinpointing each linked record, an involved party might be more interested in the aggregated statistics of the linked records (collaboration value), such as the number of linked records with each potential candidate:  
if the number is too small, it is a waste of time to compute the exact linking record pairs. 
With this observation, we propose a new framework for data collaboration called~\emph{Screening-then-Linkage}~(Fig.~\ref{fig:Screening_then_Linkage}), and develop a lightweight \textbf{P}rivacy-\textbf{P}reserving \textbf{R}ecord \textbf{S}creening (PPRS) system, \sys. Our main idea is to employ a lightweight protocol to securely, effectively, and efficiently estimate \textbf{the aggregated statistics of linked records} with all potential candidates, so as to select the most suitable targets as the input to the subsequent PPRL procedure. 
\begin{figure}[t]
\centering
\includegraphics[width=\linewidth]{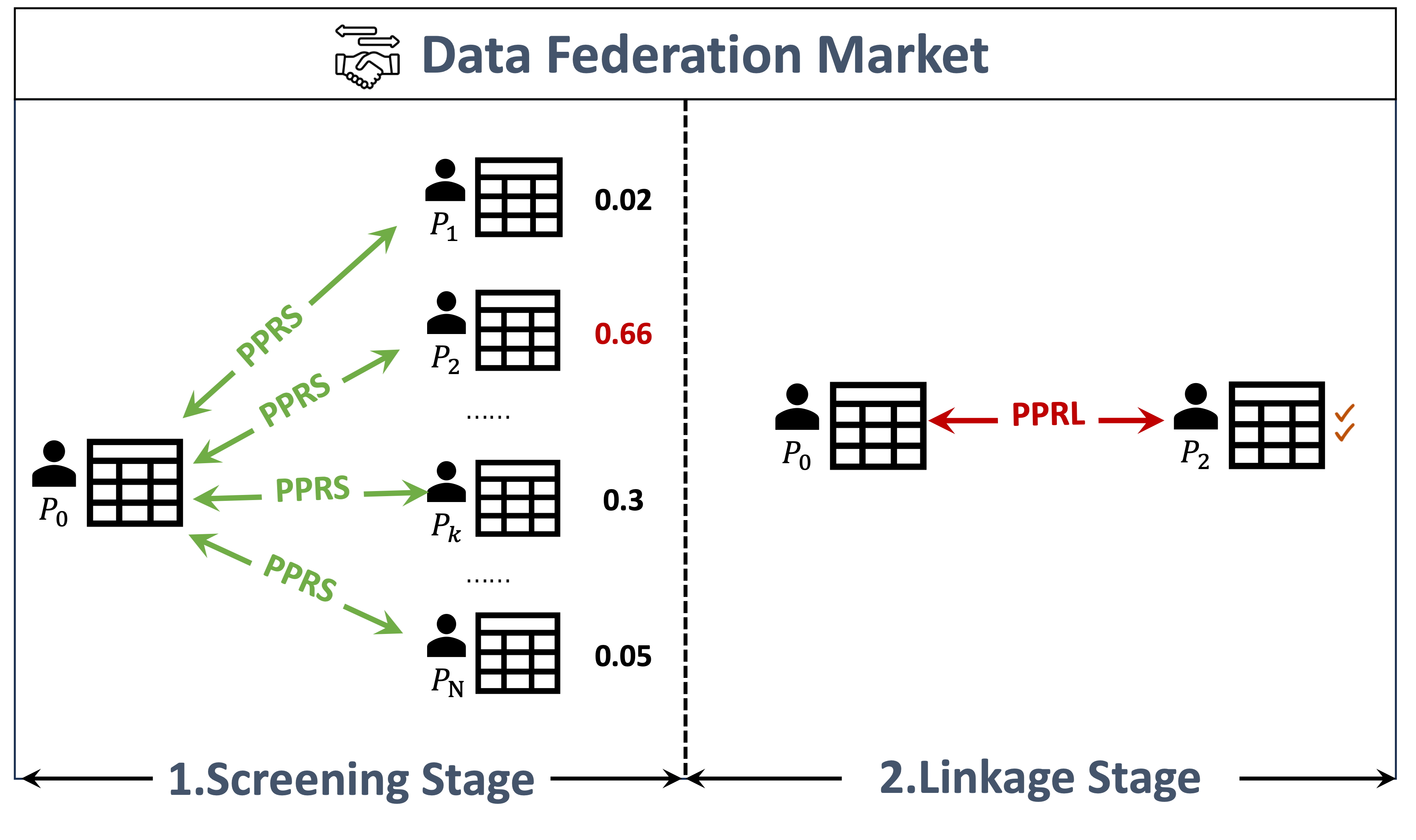}
\caption{\emph{Screening-then-Linkage} Framework.} 
\label{fig:Screening_then_Linkage}
\end{figure}

Notice that the relaxed goal of PPRS allows us 
to achieve efficiency and effectiveness simultaneously, in contrast to using PPRL directly. 
Instead of naively comparing all parties between databases, most PPRL works rely on \emph{blocking} strategy that assigns records to different bins, and only compares the subset of the pairs in the corresponding bins. 
In PPRL, opting for a smaller bin size tends to compromise effectiveness in favor of efficiency, given the task’s fine-grained nature and sensitivity to the bin size. 
Empirically, bin size is set that in most works that requires at least $\mathcal{O}(n\log n)$ costly secure comparisons~\cite{wei2023cryptographically, he2017composing}.
For PPRS, we can reduce the record screening problem to the set intersection problem: considering a record with multiple attributes, linking a record is determined by whether most of its attribute values have appeared in the opposite databases, which only involves $\mathcal{O}(n)$ secure comparisons, which is more lightweight than PPRL. Set intersection does not involve bins as in PPRL and is therefore unaffected by the impact of bin size.


However, realizing a secure protocol based on the idea is not trivial.
The most relevant technique is private set intersection (PSI), a protocol that can reveal the intersection result without leaking the elements outside the intersection.
Adir~\etal~\cite{adir2022privacy} implements the PSI-based PPRL, which takes all the attributes from one record as input, and two parties can determine whether the record is linked or not.
However, it can only support schema-agnostic settings with exact matching, which restricts the usage scope.
A straightforward solution is to input all the attribute values to PSI separately to support schema-aware setting, or input the signature bins derived from Locality Sensitive Hashing (LSH) to support approximate matching~\cite{adir2022privacy}.
However, both parties will know the intersected attribute value distribution even the records are not linked, which violates the privacy requirements of PPRL.
To preserve the privacy of all the attributes, we choose 
circuit-PSI protocol~\cite{pinkas2019efficient, chandran2022circuit, rindal2021vole}, which is an advanced variant of PSI to enable secure computation over the intersection.
Unlike PSI, circuit-PSI outputs the boolean shares of the intersection result, preserving the privacy of the attribute value.
However, it restricts that \emph{only symmetric functions} can be applied to the intersection items, namely the output is not related to the order of elements in the intersection.
Therefore, it cannot naturally support two basic \emph{asymmetric permutation} functions: (i) replicating elements and (ii) reordering elements. 
But these functions are essential to support approximate matching, schema-aware setting, and complex scoring functions for collaboration value in PPRS system.
To address this, we propose an Oblivious Feature Alignment (OFA) protocol to support these by enabling oblivious data permutation that builds upon secure permutation techniques~\cite{mohassel2013hide}.
Moreover, we incorporate various optimizations in OFA to make it practical in PPRS, which greatly improves efficiency.

Our contributions can be summarized as follows:

$\bullet$ We identify the discrepancy between the needs of the data market and the capabilities of existing PPRL methods and thus introduce the \emph{Screening-then-Linkage} framework and the PPRS system, \sys, specifically to address the issue.

$\bullet$ \sys~incorporates the circuit-PSI and a computationally efficient OFA protocol, ensuring comprehensive support for various operations vital to PPRS, including both exact and approximate matching of attributes.

$\bullet$ We validate the security, effectiveness, and efficiency of \sys\ through rigorous analysis and extensive evaluation. 
Our system is $165 \times$ faster than state-of-the-art (SOTA) PPRL~\cite{wei2023cryptographically} on million-level records, and can accommodate up to $850$ times more records within the same time frame than the SOTA PPRS system, SFour~\cite{khurram2020sfour}.

\section{Problem Definition}
\label{sec:problem}
\subsection{Notations}
For a finite set $S$, $s \leftarrow S$ denotes choosing a uniformly random element from $S$ and 
$|S|$ denotes the number of elements in $S$. 
We use bold lowercase for vectors (\eg, $\mathbf{a}$) and uppercase for matrices (\eg, $\mathbf{M}$).
$\mathbf{a}[i]$ (resp.~$\mathbf{M}[i]$) denotes the $i$-th entry (row), and $\mathbf{M}[i, j]$ is the value at the row $i$, column~$j$.


\subsection{System Model}


We define a party denoted as $P_0$ that seeks to identify data providers of superior quality among $N$ other parties $(P_1, P_2, ..., P_N)$ in the data market.
Each party possesses a database denoted as $\mathbf{V}^i$ from domain $\mathbf{V}$, comprising a collection of $n_i$ records from domain $R$. 
These parties agree on a matching function $\mathsf{m}:R\times R \rightarrow \{0,1\}$.
The goal of the PPRS is to get $\mathsf{f}(\{v | v \in \mathbf{V}^0, v' \in \mathbf{V}^i, \mathsf{m}(v, v')=1\})$ between $P_0$ and $P_i$ where $\mathsf{f}(\cdot)$ is a secure score function on $P_0$'s linked records $\{v\}$.
Without loss of generality, we assume $n_0 = n_1 = ... = n$, nevertheless, our solution accommodates scenarios where each party holds a varying number of records.

Two modes are supported~\cite{papadakis2023analysis}: (i) \emph{Schema-aware} knows the schema information and focuses on the most relevant attributes. Without loss of generality, we assume all parties have the same attributes here. (ii) \emph{Schema-agnostic} does not take any schema information into account and instead uses the whole attribute values as matching input. 

Two matching types are supported~\cite{papadakis2023benchmarking}: (i) \emph{Exact matching} determines the match of attribute value iff they are exactly the same, unless the value is missing; (ii) \emph{Approximate matching} estimates a matching probability given two attribute values, thus can tolerate the typos and different syntax representations. 

\subsection{PPRL's Trade-off b/w Efficiency and Effectiveness/Security}
\label{sec:problem_pprl}
PPRL aims at finding the exact link between the records, \eg, all linked pair $(v^0, v^1)$ where $v^0\in \mathbf{V}^0, v^1 \in \mathbf{V^1}$ between the $P_0$ and $P_1$.
However, the huge time cost hinders them far from practical, \ie, SOTA~\cite{wei2023cryptographically} takes hours to link 40K records from two parties. 
The workflow contains two stages:
(i) Blocking: both parties assign the records to multiple bins via blocking.
(ii) Matching: execute the attribute-level matching between the pairwise bins. 
The number of secure comparisons within each bin is $\mathcal{O}((n/\ell)^2)$ where $\ell$ is the number of bins, resulting in $\mathcal{O}(n^2/\ell)$ comparisons of the whole protocol.
With larger $\ell$, the number of secure comparisons will decrease a lot.
However, this may cause a high false negative rate due to the probability of candidate pairs of record, one from $P_0$ and the other from $P_1$, assigning to the corresponding bin decrease.
To balance the effectiveness and efficiency, $\ell$ is set to  $\mathcal{O}(n \log n)$ empirically~\cite{he2017composing,wei2023cryptographically},  resulting a time complexity $\mathcal{O}(n \log n)$ of the whole protocol.
This is only an ideal situation without considering the security.
The number of records within the bin is sensitive information, which can leak the data distribution, for example, the block bin based on ``gender'' attribute can leak gender ratio.
Thus, padding around $\eta$ dummy elements in each bin is required to preserve the privacy of bin size.
The SOTA~\cite{wei2023cryptographically} achieves $\eta=n/2\ell$.
Therefore, there are at least $\ell (n/\ell+\eta)^2\approx 2n\log n$ secure comparisons.


Consider a data federation market with a party, $P_0$ and 10 candidates, each has 1M records. It takes around one month (10$\times$70 hours per PPRL) for $P_0$ to identify the most valuable partners.
We observe that it is not necessary to find the \emph{exact linkage} between $P_0$'s database and all candidates. Instead, we can first quickly estimate the \emph{collaboration value}, \eg, number of linked records, so that we can greatly shrink the candidates' size. Then, $P_0$ only needs to execute PPRL with high-valued candidates, thereby reducing computing costs.



\subsection{Rationale of \emph{Screening-then-Linkage} Framework}
\label{sec:problem_framework}

To overcome the above issue, we develop a \emph{Screening-then-Linkage} framework
(Algorithm~\ref{alg:screening_then_linkage}):
(i) $P_0$ initiates the PPRS protocol with other parties and obtains the collaboration value $c$ (Line \ref{alg:screening_then_linkage:L2}--\ref{alg:screening_then_linkage:L3});
(ii) $c$ is compared against a predefined threshold, and those that exceed the threshold are added to set $U$ (Line \ref{alg:screening_then_linkage:L4}--\ref{alg:screening_then_linkage:L6});
(iii) $P_0$ gets the exact linkage by executing PPRL with the parties in $U$ (Line \ref{alg:screening_then_linkage:L8}--\ref{alg:screening_then_linkage:L10}).

\renewcommand{\algorithmicrequire}{\textbf{Input:}}
\renewcommand{\algorithmicensure}{\textbf{Output:}}
\begin{algorithm}[t]
	\caption{\emph{Screening-then-Linkage} Framework}           
	\label{alg:screening_then_linkage}               
	\begin{algorithmic}[1]                
	\Require Data $\mathbf{V}^i$ of party $P_i$ ($i \in [0, N]$);
        \Ensure Set of linked records $\tilde{\mathbf{V}}$
        \State $U = \emptyset$
        \For{$i = 1$ to $N$} \label{alg:screening_then_linkage:L2}
        \State $P_0$ recoveries $c$ from the output of $\mathsf{PPRS}(\mathbf{V}^0, \mathbf{V}^i)$  \label{alg:screening_then_linkage:L3}
        \If {$c > threshold$}  \label{alg:screening_then_linkage:L4}
        \State $U = U \cup \{ i \}$ 
        \EndIf  \label{alg:screening_then_linkage:L6}
        \EndFor
        \For{$i \in U$}  \label{alg:screening_then_linkage:L8}
        \State $\tilde{\mathbf{V}}^{i} = \mathsf{PPRL}(\mathbf{V}^0, \mathbf{V}^i)$ \label{alg:screening_then_linkage:L10}
        \EndFor  
        \State \Return $\tilde{\mathbf{V}} \gets \{\tilde{\mathbf{V}}^{i} | i\in U\}$
	\end{algorithmic}
\end{algorithm}

The proposed framework offers several advantages. First,  executing the lightweight PPRS protocol to get the collaboration value before pinpointing exact linked records 
reduces the overall execution time.
Secondly, PPRS ensures that no information other than the final collaboration value is leaked to $P_0$, thereby protecting the data privacy of the involved parties. 
Additionally, if we allow the disclosure of whether the records of $P_0$ are linked, the subsequent PPRL can operate on these linked records instead of all the records.

\subsection{Screening for Record Linkage}
\label{sec:screeing4rl}
To implement a lightweight screening, reducing the number of comparisons is in urgent need.
Thus, our core idea is to reduce the screening problem to the set intersection problem, which can be efficiently solved with only $n$ secure comparisons, such as merge sort.
Considering the most simple case - the exact matching in a schema-agnostic setting.
This process can be straightforwardly executed by merging all values from different attributes into a single, extended string. 
Then, it is reduced to the set intersection problem: finding the records in common from two sets $\mathbf{V}^0$ and $\mathbf{V}^1$


\subsubsection{Extend to Approximate Matching} 
\label{sec:s4rl_approx}
There are two main types of approximate matching: (i) fuzzy matching for string attributes, and (ii) distance matching for numeric attributes. Specifically, we employ locality-sensitive hashing 
(LSH)~\cite{indyk1998approximate}, a highly effective approach. It identifies the nearest neighbor in high-dimensional space and outputs identical hash values for similar inputs. 
We adopt MinHash LSH for string fuzzy matching and cross-polytope LSH~\cite{andoni2015practical} for numeric distance matching for their efficiency and effectiveness.
Without loss of generality, let’s consider MinHash LSH as an example. It approximates the Jaccard coefficient between sets derived from candidate record pairs. A string is divided into a set, $E$, of all overlapping substrings of length $q$ (q-grams). 
Then $B\cdot R$ different MinHash functions are adopted, each will be applied to $E$ and output hash value $\mathsf{min}_{e\in E} h_k(e)$ where $h_k(\cdot)$ is the $k$-th hash function.
The $B\cdot R$ values are grouped into $B$ bands of $R$ hash values, and the concatenation of the $R$  hash values is the band's signature.
A candidate attribute pair is considered a match if at least one of the band signatures matches. Cross-polytope LSH functions similarly for numeric values, which also hashes attribute values into $B$ bands.
To incorporate LSH with the set intersection, each attribute will be expanded to $B$ band signatures via MinHash.
An attribute is considered to be matched if one of the $B$ signatures, for example, $j$-th signature, exists in the other party's set of $j$-th signatures.

\subsubsection{Extend to Schema-aware} 
\label{sec:s4rl_schema_aware}
In this mode, the parties will align the schema between them. The straightforward approach is to run a set intersection at the attribute level, \ie, the value of the $j$-th attributes is matched iff it exists in the other party's set of $j$-th attributes.
Then a record is considered to be linked if most of its attribute values is matched (or via a score function).
However, this may not be suitable for attributes with low discriminative power, \ie, those having limited attribute space.
Take the ``gender'' attribute as an example: the values ``male'' and ``female'' are common across both $\mathbf{V}^0$ and $\mathbf{V}^1$, potentially leading to a high false positive rate for screening. 
To mitigate this, we can derive new attributes, \eg ``name\_gender'', by concatenating low and high discriminative attributes. 
This strategy can also be used when some attributes are not available in other parties, \ie, the unavailable attribute can be concatenated as one derived attribute.
Now, the linkage criteria of records is determined from these derived attributes.
The specific scheme for deriving new attributes can be established through empirical testing.

To prove the effectiveness of this idea, we conduct the evaluation in Section~\ref{sec:eva_on_sys} with 8 (4 pairs) datasets. The balanced accuracy rate is all larger than 90\%, which is competitive.

\section{Preliminaries}

\subsection{Threat Model}
We consider a static, semi-honest probabilistic polynomial time (PPT) adversary~~\cite{khurram2020sfour, adir2022privacy, wei2023cryptographically}, which follows the protocol without deviation but tries to learn more beyond the outputs.
Formally, let $\varPi$ be a two-party protocol computing a deterministic functionality $\mathcal{F}: \{0, 1\}^* \times \{0, 1\}^* \mapsto \{0, 1\}^* \times \{0, 1\}^*$, 
The output of $\mathcal{F}$ is a pair,
$\mathcal{F}_0(x_0, x_1)$ to $P_0$ 
and $\mathcal{F}_1(x_0, x_1)$ to $P_1$.
For $i \in \{0, 1\}$, the view of $P_i$ during an execution of $\varPi$ on $(x_0, x_1)$ is denoted by $\mathcal{V}^{\varPi}_i(x_0, x_1)$.
\begin{definition}
	For a function $F$, $\varPi$ privately computes $\mathcal{F}$ if there are PPT algorithms $\mathcal{S}_0$ and $\mathcal{S}_1$ such that 
$\mathcal{S}_i(x_0, \mathcal{F}_1(x_0, x_1)) \approx_c \mathcal{V}^{\varPi}_i(x_0, x_1)$
$\forall x_0, x_1 \in \{0, 1\}^*, i \in \{0, 1\}$,
	where $\approx_c$ denotes computational indistinguishability.
\label{def:2pc}
\end{definition}

We prove the security of our protocols in the~\emph{hybrid model} with composition theorem as follows.
\begin{theorem}[Composition~\cite{cu/Goldreich2004}] 
\label{theorem:composition}
Let $\mathcal{F}_1, \ldots, \mathcal{F}_m$ be ideal $2$-party functionalities, and let $\varPi$ be a $2$-party protocol in the computational $(\mathcal{F}_1, \ldots, \mathcal{F}_m)$-hybrid model, where at most one ideal evaluation call occurs per round.
Let $f_1, f_2, \ldots, f_m$ be the $2$-party protocols that evaluate $\mathcal{F}_i$ in the computational setting.
For any PPT passive real-world adversary $\mathcal{A}$ and any PPT environment machine $\mathcal{Z}$,
there exists a PPT passive adversary $\mathcal{S}$ in the $(\mathcal{F}_1, \ldots, \mathcal{F}_m)$-hybrid model such that
$	\text{Sim}_{\varPi, \mathcal{S}, \mathcal{Z}}^{\mathcal{F}_1, \mathcal{F}_2, \ldots, \mathcal{F}_m}\approx_c\text{Exec}_{\varPi^{(f_1, f_2, \ldots, f_m)}, \mathcal{A}, \mathcal{Z}}$.
\end{theorem}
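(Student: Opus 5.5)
The approach is the standard hybrid argument behind the sequential composition theorem of Goldreich, specialised to semi-honest static adversaries. Since each subprotocol $f_i$ privately computes $\mathcal{F}_i$ in the sense of Definition~\ref{def:2pc}, each $f_i$ comes with PPT simulators $\mathcal{S}^{(i)}_0,\mathcal{S}^{(i)}_1$ whose output, given the corrupted party's input and output for $\mathcal{F}_i$, is computationally indistinguishable from that party's real view in $f_i$. I would construct the hybrid-model adversary $\mathcal{S}$ as follows.
\begin{enumerate}[label=(\roman*)]
\item $\mathcal{S}$ runs $\mathcal{A}$ internally.
\item At every point where $\varPi^{(f_1,\ldots,f_m)}$ would execute $f_i$, $\mathcal{S}$ instead calls the ideal $\mathcal{F}_i$ and obtains the corrupted party's output.
\item $\mathcal{S}$ feeds this output, together with the corrupted party's input to that call, into $\mathcal{S}^{(i)}_b$, and hands the simulated transcript to $\mathcal{A}$ as if it were the real subprotocol view.
\end{enumerate}
All messages of $\varPi$ outside the subcalls are passed through unchanged. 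Because the adversary is passive, the inputs the corrupted party supplies to $\mathcal{F}_i$ are exactly those prescribed by $\varPi$, so $\mathcal{S}$ knows them.

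Next I would order the subprotocol invocations by round; the hypothesis of at most one ideal call per round makes this a well-defined linear order $1,\ldots,T$ with $T$ polynomial. Define hybrid $H_t$ by letting the first $t$ invocations be ideal calls with simulated transcripts and the remaining ones run as real protocols. Then $H_0=\text{Exec}_{\varPi^{(f_1,\ldots,f_m)},\mathcal{A},\mathcal{Z}}$ and $H_T=\text{Sim}^{\mathcal{F}_1,\ldots,\mathcal{F}_m}_{\varPi,\mathcal{S},\mathcal{Z}}$. If $H_{t-1}$ and $H_t$ could be distinguished with non-negligible advantage, we reduce to the security of the relevant $f_i$. The reduction runs $\mathcal{Z}$, $\mathcal{A}$, the honest party, and all other invocations itself, where earlier invocations are ideal and later ones real, both of which are efficiently samplable. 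It then embeds the challenge (real view versus simulated view, together with the joint outputs) at invocation $t$. A triangle inequality over $T$ polynomially many steps gives overall negligible distance.

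The main obstacle is the embedding step. The distinguisher for the $t$-th call must be given the joint distribution of the corrupted view and the honest party's outputs, since the honest party's outputs feed later rounds. Indistinguishability of views alone is therefore not enough for a randomized $\mathcal{F}_i$. I would handle this by noting that the $\mathcal{F}_i$ here are deterministic (Definition~\ref{def:2pc} concerns deterministic functionalities), so correctness plus view-simulation implies joint indistinguishability. For functionalities that output secret shares, I would either treat them as deterministic given explicit randomness or appeal directly to the joint-distribution formulation in~\cite{cu/Goldreich2004}. A further point to check is that the auxiliary input fixed by the inputs to the other calls is non-uniform advice, so the reduction needs security of $f_i$ against non-uniform distinguishers, which the standard definition provides.
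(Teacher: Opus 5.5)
The paper gives no proof of this statement; it imports the sequential composition theorem from Goldreich. Your sketch is the standard hybrid argument behind that result and is correct, including the care you take over joint (view, output) indistinguishability and over fixing the execution prefix as non-uniform advice. One caveat: the functionalities the paper actually composes ($\mathcal{F}_{\mbox{CPSI}}$, $\mathcal{F}_{\mbox{OFA}}$, $\mathcal{F}_{\mbox{B2A}}$, etc.) output fresh random shares and are therefore randomized, so in the cases that matter here the argument rests on your fallback hypothesis (security of each $f_i$ in the joint-distribution sense), not on the deterministic shortcut via Definition~\ref{def:2pc}.
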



\subsection{Cryptographic Building Blocks}

\subsubsection{Obilivous Transfer (OT)}
1-out-of-n OT~\cite{rabin1981exchange} is a secure two-party computation protocol between a sender inputting $n$ messages and a receiver inputting a selection index $i\in[0,n-1]$. 
The receiver only gets the $i$-th message while the sender is oblivious to which message the receiver gets.  
We denote its ideal functionality as $\mathcal{F}_{\mbox{OT}}$. 

\subsubsection{Secret Sharing Schemes}


\hfill

\textbf{(2,2)-arithmetic additive secret sharing~\cite{demmler2015aby}.}
For an $l$-bit value $x$, it is represented as $x=\langle x \rangle^A_0+\langle x \rangle^A_1$, where $\langle x \rangle^A_b \in \mathbb{Z}_{2^l}$ is held by $P_b$ for $b\in\{0,1\}$. 
For floating-point numbers $\widetilde{x}\in \mathbb{R}$, they are lifted using a specified precision $f$ to encode them as fixed-point numbers $x=\lceil \widetilde{x}2^f \rceil \in [-2^{l-1},2^{l-1}]$. 
This scheme supports the following operations:

\noindent $\bullet$ Addition. $\langle z \rangle^A=\langle x \rangle^A + \langle y \rangle^A$ is run by either parity locally.

\noindent $\bullet$ Multiplication. $\langle z \rangle^A= \mathcal{F}_{\mbox{MUL}}(\langle x \rangle^A, \langle y \rangle^A)$, which can be computed using Beaver's triples~\cite{beaver1992efficient}.

\textbf{2-out-of-2 XOR-based boolean sharing~\cite{demmler2015aby}.}
A boolean value $x$ can be represented as $x=\langle x \rangle^B_0\oplus \langle x \rangle^B_1$, where $\langle x \rangle^B_b \in \mathbb{Z}_2$ is held by party $P_b$ for $b \in \{0,1\}$. 
The boolean sharing scheme supports the following operations:

\noindent $\bullet$  XOR. $\langle z \rangle^B = \langle x \rangle^B \oplus \langle y \rangle^B$ is run by either parity locally. 

\noindent $\bullet$ NOT. $\langle y \rangle^B = \neg \langle x \rangle^B$  is only run by one parity locally. 
    
\noindent $\bullet$ AND. $\langle z \rangle^B = \mathcal{F}_{\mbox{AND}}(\langle x \rangle^B, \langle y \rangle^B)$ is achieved by a secure 2PC protocol using pre-computed boolean triples.~\cite{demmler2015aby}

\noindent $\bullet$ OR: $\langle z \rangle^B = \mathcal{F}_{\mbox{OR}}(\langle x \rangle^B, \langle y \rangle^B)$ 
is achieved by $x \vee y = x \land y\oplus(\neg x \oplus \neg y)$ using $\mathcal{F}_{\mbox{AND}}$.

\noindent $\bullet$ Multiplexer. $\langle z \rangle^A = \mathcal{F}_{\mbox{MUX}}(\langle x \rangle^A, \langle y \rangle^B)$. If $y$ is $\true$ then $z=x$ else $z=0$. It can be realized via two calls to $\mathcal{F}_{\mbox{OT}}$.~\cite{rathee2020cryptflow2}
    
\noindent $\bullet$ Boolean share to arithmetic share. $\langle y \rangle^A = \mathcal{F}_{\mbox{B2A}}(\langle x \rangle^B)$. It can implemented via one call to $\mathcal{F}_{\mbox{OT}}$~\cite{rathee2020cryptflow2}.

\noindent $\bullet$ Most significant bit. $\langle y \rangle^B = \mathcal{F}_{\mbox{MSB}}(\langle x \rangle^A)$. If $x \leq 0$ then $y$ is $\true$/1 else $y$ is false/0. This can be realized via $\mathcal{F}_{\mbox{OT}}$~\cite{rathee2020cryptflow2}.

\subsubsection{Circuit-PSI}
\label{sec:preliminaries_circuit_psi}

Private Set Intersection (PSI) allows two parties to securely compute the intersection $X\cap Y$ of their respective sets $X$ and $Y$ while keeping items non-intersection items private.
Circuit-PSI~\cite{pinkas2019efficient, chandran2022circuit, rindal2021vole}, the functionality of which denotes as $\mathcal{F}_{\mbox{CPSI}}$, takes a step further by allowing computations on the intersection, denoted as $f(X\cap Y)$, without leaking any other information. 
Specifically, 
both parties use the same set of hash functions $\{h_1, h_2, ..., h_a\}$, and  
the receiver maps $X$ to a table containing $(1+\epsilon) |X|$ ($\epsilon > 0$, used for avoiding collisions) bins using \emph{Cuckoo hashing}, while the sender maps $Y$ using \emph{simple hashing}. 
For \emph{Cuckoo hashing} insertion, the receiver first checks whether $\exists i \in [1,a]$ that the bin $h_i(x)$ is empty, where $x\in X$.
If so, it stores $x$ in one of the empty bins;
otherwise, it randomly evicts the current item in bin $h_i(x)$ and then recursively tries to insert the evicted item.
The final evicted item is placed in a stash if this process does not terminate after a certain number of trials.
If $\epsilon$ and $a$ are chosen properly, the stash can be removed~\cite{pinkas2019efficient, chandran2022circuit, rindal2021vole}.
For \emph{Simple hashing} insertion, the sender hashes $y \in Y$ to bins by \emph{all} $h_i$; each bin may have multiple items. 
Both parties pad their tables with ``fake entries'' to the maximum bin size.


There is \emph{exactly} one bin to which both parties map each common item.
PSI is then converted to private set membership testing:
Whether the item that the receiver places in a bin is among the items that are placed in this bin by the sender.
Oblivious programmable pseudorandom function is employed to obtain the final intersection result in the form of boolean shares, which involves $\mathcal{O}(1)$ secure comparisons. Therefore, there are $\mathcal{O}(n)$ secure comparisons in circuit-PSI.
The subsequent operations are applied to the output shares, the size is equal to the bin size, \ie, related to the record size of the receiver.
Thus, the record size of the sender does not affect the description of the subsequent protocols, which indicates that the assumption of $n_0=n_1=...=n$ is reasonable.
Although circuit-PSI enables secure computation of the function $\mathsf{f}$ over the intersection, it only supports symmetric functions that are oblivious to the order of the elements in the intersected set.

\subsubsection{Secure Permutation}

The secure permutation protocol 
enables two parties, $P_0$ and $P_1$, to permute the input shares, $\langle\mathbf{x}
\rangle$, based on a permutation $\pi$ defined by one of the parties. The output remains shared as $\langle\mathbf{y}
\rangle$.
Two techniques can instantiate it: Homomorphic Encryption (HE)~\cite{chase2020secret, fang2021large} and Oblivious Switching Network (OSN)~\cite{mohassel2013hide, chase2020secret} .
HE-based approaches have a $\mathcal{O}(n)$ complexity but require costly asymmetric encryption. While OSN relies on efficient symmetric key operations and OT, but has a $\mathcal{O}(n\log n)$ complexity.
Motivated by the efficiency advantages it offers over the HE-based approaches, as reported in~\cite{mohassel2013hide}, we adopt the OSN-based Oblivious Extended Permutation (OEP)~\cite{mohassel2013hide} as our baseline approach. 
The OEP is capable of handling the secure permutation problem with an extended permutation $\pi$, which allows for the replication or omission of elements.
Formally, given $N, M \in \mathbb{N}+, N\leq M$, the permutation $\pi$ is considered an extended permutation if, for every $y\in[1,N]$, there exists exactly one $x\in[1, M]$ such that $\pi(x) = y$.

\begin{figure*}[t]
    \centering
    \includegraphics[width=\linewidth]{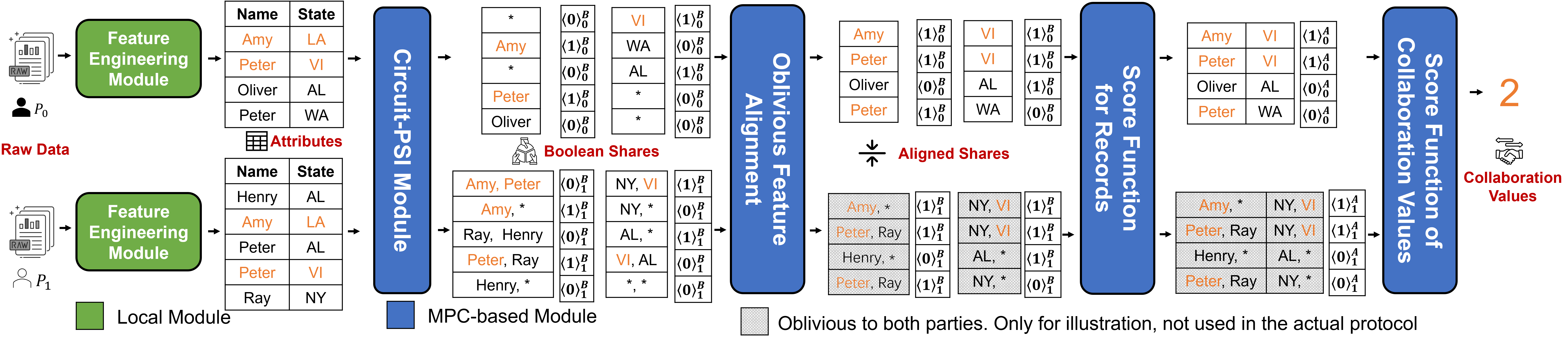}
    \vspace{-5pt}
    \caption{An example of \sys. The records marked in orange right after the feature engineering module are linked.} 
    \label{fig:overview_example}
    \vspace{-5pt}
\end{figure*}

\section{Privacy-preserving Record Screening} 
For simplicity, $P_0$ and $P_1$ are the two participants.




\subsection{Overview of \sys}

To realize the idea from Section~\ref{sec:screeing4rl} in a privacy-preserving manner, we proposed MPC-based PPRS, \sys.
Fig~\ref{fig:overview_example} gives an example to illustrate the process:

\noindent$\bullet$ \textbf{Feature Engineering Module}:
Each party $P_i$ locally preprocesses its dataset $\mathbf{V}^i$ to $\bar{\mathbf{V}}^i$.
To support schema-aware setting, derived attributes are generated empirically~(Section.~\ref{sec:s4rl_schema_aware}).
To support the approximate match, one attribute is expanded to $B$ band signatures via LSH (Section.~\ref{sec:design_non_exact_matching}).
For simplicity, we assume both $\bar{\mathbf{V}}^0$, $\bar{\mathbf{V}}^1$ contains $n$ records and $m$ derived attributes (or $Bm$ band signatures).


\noindent$\bullet$ \textbf{Circuit-PSI Module}:
It takes the derived attributes and identifies the matched ones (in the intersection) in a privacy-preserving manner. 
The result is in boolean secret shares to ensure no information leakage of $\mathbf{V}^i$.


\noindent$\bullet$ \textbf{Oblivious Feature Alignment (OFA)}:
Similar to the plaintext idea in~Section~\ref{sec:screeing4rl}, 
one instance of circuit-PSI will be invoked for each derived attribute.
Due to the hashing operation in circuit-PSI, the attribute of the same records will be out-of-order.
The OFA protocol addresses this issue by employing the secure permutation technique with several optimizations. 
It ensures the elements corresponding to the same record are properly aligned for subsequent computations.


\noindent$\bullet$ \textbf{Score Function Module}:
It determines whether the record is linked or not obliviously, ultimately deriving the final collaboration value between two parties' datasets.

\subsection{Circuit-PSI Module}

In a na\"ive exact matching and schema-agnostic setting, PSI could easily handle. However, PSI can support neither approximate matching nor schema-aware setting without information leakage.
\sys\ runs multiple circuit-PSI instances, each for one attribute (to support schema-aware) or one band (to support LSH bin for the approximate matching), and the privacy of the intermediate result can be preserved.
Specifically, 
two parties proceed to carry out $m$ ($Bm$ for approximate matching) separate instances of the circuit-PSI protocol, with each instance aligning with a specific derived attribute (\eg, $j$-th attribute) from $\bar{\mathbf{V}}^0$ and $\bar{\mathbf{V}}^1$.
In an individual circuit-PSI instance, $P_0$ (resp. $P_1$) acts as the sender 
 (resp. receiver), and inputs $j$-th attribute value $\bar{\mathbf{V}}^0$  (resp. $\bar{\mathbf{V}}^1$) of all records. 
It generates boolean shares $\langle \mathbf{q}_j \rangle_l^B$ for $P_l$, which indicates the presence of $j$-th derived attribute within the intersection set.



Both the Cuckoo hash and simple hash 
can be conceptualized as permutation functions, denoted as $\pi^C$ and $\pi^S$, respectively. 
It causes two problems:
(i) misalignment of intersection results across attributes: the order of the $\langle \mathbf{q}_j \rangle^B$ is determined by the Cuckoo hash, which is different for different attributes;
(ii) the result of the duplicate item is missing: the hash operation maps duplicate attribute values to the same hash bin.
Therefore, to align $\langle \mathbf{q}_j \rangle^B$ of all the attributes, we need to apply the extended permutation, ${\pi^C_j}^{-1}:[0,(1+\epsilon)n)\rightarrow [0, n)$, on $\langle \mathbf{q}_j \rangle^B$ obliviously.
The problem is not trivial due to the simple hash maps multiple records residing within the same bin, \ie, the order of records after circuit-PSI is not disclosed to $P_1$. 
A straightforward solution is to let $P_0$ send the permutation $\pi^C$ to $P_1$. Nevertheless, this raises concerns regarding the potential information leakage related to $\bar{\mathbf{V}}^0$, as in~\Cref{fig:problem_send_permutation}.
By analyzing $\pi^C$, $P_1$ can infer the likelihood of a record matching. For example, if the attribute values of a record aligned together post-alignment, the probability of a match, $\Pr(\mbox{Match})$, could be substantially elevated.

\begin{figure}[t]
\centering
\includegraphics[width=\linewidth]{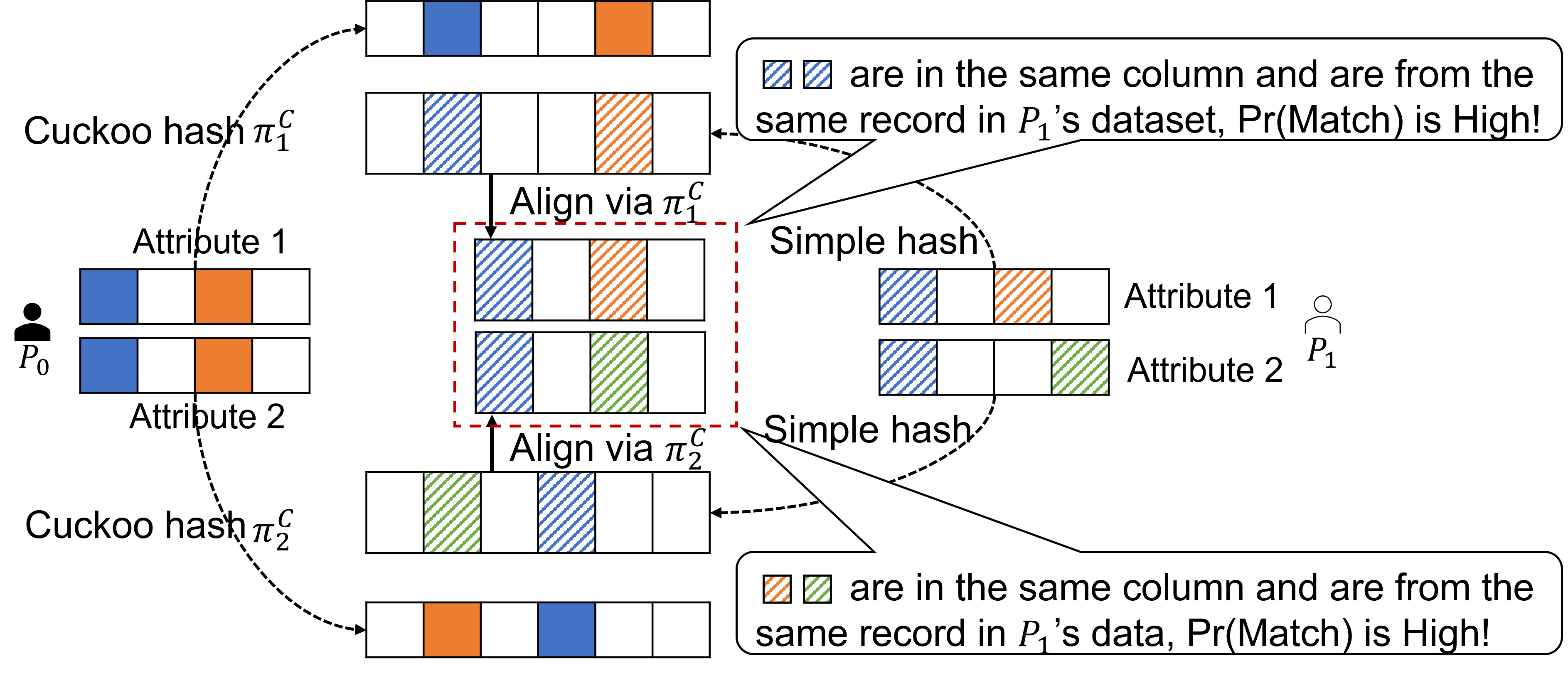}
\caption{An example of the information leakage if $P_0$ transmits the permutation $\pi^c$ to $P_1$. The bin value with the same color is from the same record.} 
\label{fig:problem_send_permutation}
\end{figure}

\begin{figure}[t]
    \centering
    \includegraphics[width=0.68\linewidth]{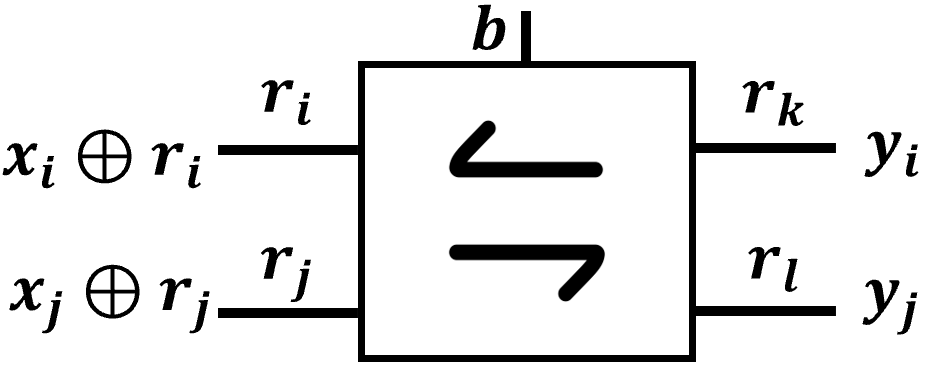}
    \caption{The structure of 1-switch. $r_i, r_j, r_k, r_l$ are the random wire labels generated by the receiver, $b$ is the selection bit from the sender, $x_i$ and $x_j$ are receiver's input, $y_i$ and $y_j$ are the blinded output to the sender.}
    \label{fig:1_switch}
\end{figure}

\subsection{Oblivious Feature Alignment (OFA) Protocol}

We abstract the above issue as a~\emph{secure permutation} problem: $P_0$ inputs the extended permutation ${{\pi^C_j}^{-1}}$ and the  $\langle \mathbf{q}_j \rangle_0^B$, while $P_1$ inputs $\langle \mathbf{q}_j \rangle^B_1$. The protocol provides each party the boolean shares post-permutation, denoted as $\langle \bar{\mathbf{q}}_j \rangle^B={\pi^c_j}^{-1}(\langle \mathbf{q}_j \rangle^B)$.
To solve this, we propose the Oblivious Feature Alignment (OFA) protocol, which tailors the OEP protocol~\cite{mohassel2013hide} to meet our requirements.


\newcolumntype{M}[1]{>{\centering\arraybackslash}m{#1}}
\renewcommand{\arraystretch}{1.2}
\begin{table}[t]
    \centering
    \caption{Output $y_i, y_j$ with different selection bit $b$ in replication 1-switch and permutation 1-swtich. The value marked in red can be removed in OFA for optimization.}
    \begin{tabular}{|c|c|c|c|c|c|c|}
        \hline 
        Switch Type & $b$ & $y_i$ &$y_j$ & $\mathbf{T}[b,i]$ &  $\mathbf{T}[b,j]$ \\ 
        \hline \hline
        \multirow{2}{*}{\begin{tabular}{@{}c@{}}Permutation \\ 1-switch\end{tabular}} & $0$ &  $x_i\oplus r_k$ & $x_j\oplus r_l$ & $r_i \oplus r_k$ & $r_j\oplus r_l$\\  
        \cline{2-6}
        & $1$ & $x_j\oplus r_k$ & $x_i\oplus r_l$ & $r_j\oplus r_k$&  $r_i\oplus r_l $ \\
        \hline \hline
        \multirow{2}{*}{\begin{tabular}{@{}c@{}}Replication \\ 1-switch\end{tabular}} & $0$ & $x_i\oplus r_k$  & $x_j\oplus r_l$ & $\color{red}{r_i \oplus r_k}$ & $r_j \oplus r_l$  \\  
        \cline{2-6}
        & $1$ & $x_i\oplus r_k$ &  $x_i\oplus r_l$  & $\color{red}{r_i\oplus r_k}$ & $r_i\oplus r_l$ \\
        \hline 
    \end{tabular}
    \label{tab:output_table_1_switch}
\end{table}

\subsubsection{OEP~\cite{mohassel2013hide}}
OEP is constructed via 2 basic protocols - Oblivious Replication Network (ORN) and Oblivious Permutation Network (OPN). 
The former can obliviously replicate the adjacent elements within a vector, while the latter can obliviously reorder the elements of a vector by giving a bijection mapping (\ie, one-one and onto).
Both OPN and ORN are built on the 1-switch, which replicates/permutes the adjacent input value for both parties based on a selection bit, as Fig.~\ref{fig:1_switch} shows.
The difference is the input-output table (shown in~\ref{tab:output_table_1_switch}) and network topology of these 1-switches.
OPN is implemented via a topology of Bene\v{s} Network containing $n\log n -n +1$ permutation switch, while ORN is implemented by sequentially connecting $n-1$ replication switch. 

For a given 1-switch, $P_1$ assigns random blinded values $r_i$, $r_j$ (resp. $r_k$, $r_l$) to the input (resp. output) wires, and inputs $x_i$ and $x_j$.
They get the output shares of $y_i$ and $y_j$,  determined by the $P_0$'s selection bit $b$.
All the selection bits are derived from the extended permutation $\pi: [1, N]\rightarrow [1, M]$ via depth-first search (DFS).
A 1-switch is realized by one call of $\mathcal{F}_{\mbox{OT}}$. 
Let's take replication 1-switch with $b=1$, meaning to copy the value of $y_j$ from $y_i$, as an example:
First,  $P_1$ blinds its input $\langle x_i \rangle^A_0 = x_i \oplus r_i$ and $\langle x_j \rangle^A_0 = x_j\oplus r_j$, and sends them to $P_0$.
Next, $P_1$ creates table $\mathbf{T}$ as Table~\ref{tab:output_table_1_switch} shows.
Then both parties invoke $\mathcal{F}_{\mbox{OT}}$, where $P_0$ inputs $b$ 
and $P_1$ inputs $\mathbf{T}$.
Finally, $P_0$ retrieves $\mathbf{T}[b]$ from $\mathcal{F}_{\mbox{OT}}$, and outputs $y_i = \langle x_i \rangle^A_0 \oplus \mathbf{T}[b,i]=x_i \oplus r_k$ and  $y_j = \langle x_i \rangle^A_0 \oplus  \mathbf{T}[b,j] = x_i \oplus r_l$.
Both are the shares of replication value $x_i$ because $P_1$ holds the other share $r_l$ and $r_k$.
This can be easily extended to the case where both parties input the shares:
$P_1$ replaces $x_i$,$x_j$ with its input shares, $P_0$ XOR the result $y_l$, $y_k$ with 
its reordered (permuted/replicated) input shares according to its $b$.

Built on OPN and ORN, OEP (and the improved OFA) has three following phases, which take $N$ \emph{real inputs} with $M-N$ \emph{dummy inputs} as Fig.~\ref{fig:OFA_workflow} shows:

\noindent $\bullet$ \emph{Dummy-value Placement Phase} (OPN):
For each real input that is mapped to $e$ different output positions according to $\pi$, it outputs real value followed by $e-1$ dummy values, which serve as a placeholder for replication in the subsequent phase.


\noindent $\bullet$ \emph{Replication Phase} (ORN): 
Taking the output from the previous phase as input, it directly outputs the real input and replaces the dummy input with the real input that precedes it. 


\noindent $\bullet$ \emph{Permutation Phase} (OPN). 
The output from the previous phase is taken and permuted to the final location under the guidance of $\pi$, 
which is in the same order as the $P_0$'s records.

\subsubsection{Optimizations in OFA}
OEP supports the general extended permutation, we take a further step and tailor it for the permutation of Cuckoo hash (${{\pi^C_j}^{-1}}$) with three optimizations (shown in Fig.~\ref{fig:OFA_workflow}), which 
significantly improve the efficiency.


\begin{figure*}
    \centering
    \includegraphics[width=\linewidth]{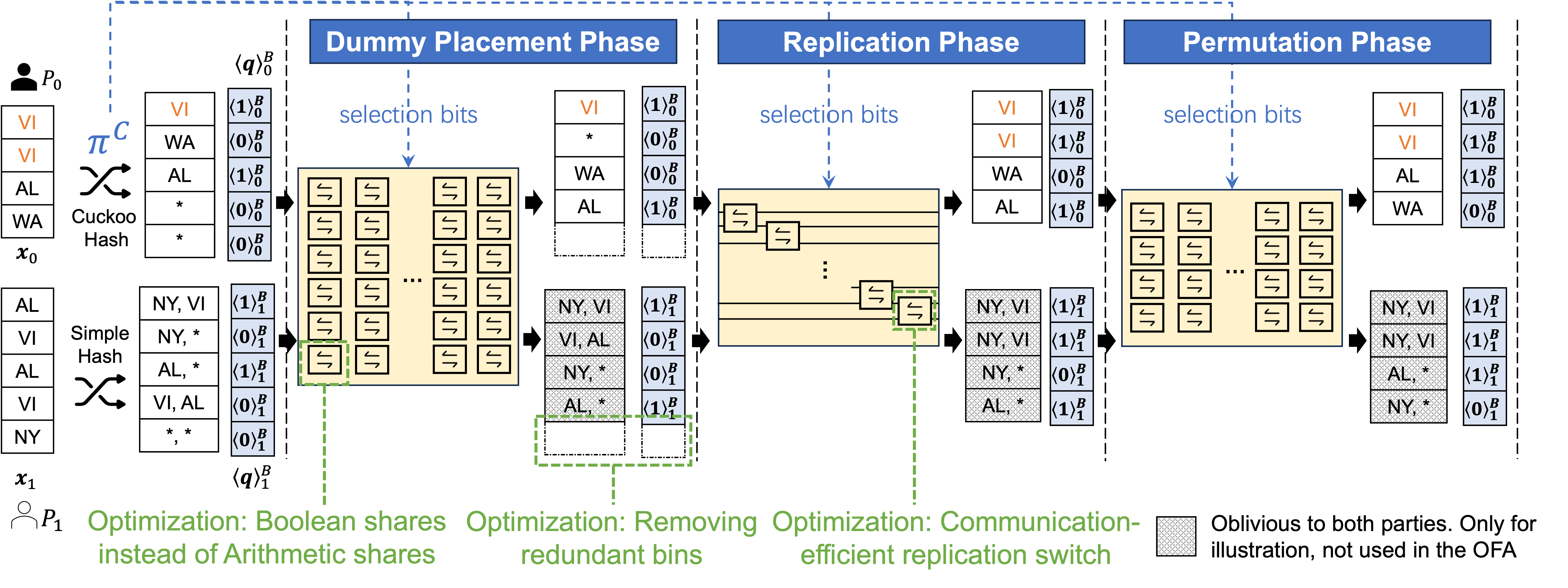}
    \caption{The workflow of Oblivious Feature Alignment (OFA) protocol. ``*'' represents a random fake entry, the attribute value marked in orange is the repeated input. $P_0$ (resp. $P_1$) inputs the value $\mathbf{x}_0$ (resp. $\mathbf{x}_1$) to circuit-PSI and gets the boolean shares result $\langle \mathbf{q}\rangle^B$ indicates the intersection result, \ie, $\mathbf{x}_0 \cap \mathbf{x}_1$. The dummy placement phase reorders the dummy values ``*'' (resp. ``VI, AL'') to the second bin right after the bin with ``VI'', which will be used in replication later. It also permutes all other redundant bins to the bottom, which will be then dropped by both parties. The replication phase copies the boolean shares of dummy inputs from the boolean share represented ``VI''. Finally, the permutation phase permutes all the bins to the order of $P_0$'s records before input to circuit-PSI. The bins (\ie, boolean shares) with blue color are the input and output of each phase. The bins of the attribute values, State, are only for illustration and are not used in the actual protocol.}
    \label{fig:OFA_workflow}
    \vspace{-5pt}
\end{figure*}

\noindent $\bullet$ \emph{Optimization 1 - Dummy-value Placement Phase}:
The input of secure permutation containing $\epsilon n$ redundant bins for privacy (see Section~\ref{sec:preliminaries_circuit_psi}), $e'$ dummy inputs, and $n-e'$ real inputs. 
Thus, during deriving the selection bits from the permutation ${\pi^C_j}^{-1}$ in DFS, we add a constraint that orders the $\epsilon n$ redundant bins at the end of the output values.
Both parties can now easily drop the last $\epsilon n$ output values, 
reducing the 1-switch used for the subsequent two phases, ultimately decreasing the number of $\mathcal{F}_{\mbox{OT}}$
from $2(1 + \epsilon)n \log((1+ \epsilon)n) - (1 + \epsilon) n + 1$ to $(1 + \epsilon)n \log((1 + \epsilon) n) + n \log n - (1 + \epsilon) n + 1$.

\noindent $\bullet$ \emph{Optimization 2 - Replication Phase}: 
We found that in replication 1-switch, the top output ($y_i$) is always equal to the top input ($x_i$) in Fig.~\ref{fig:1_switch}, regardless of the selection bit. 
Therefore, $P_1$ only needs to generate and send a partial table $\mathbf{T}[b,j]$ to $\mathcal{F}_{\mbox{OT}}$, omitting $\mathbf{T}[b,i]$ (marked in red in Table~\ref{tab:output_table_1_switch}) related to the top output wire ($r_k$). 
$P_1$ can directly blind input $x_i$ with top output label $r_k$ instead of $r_i$, \ie, $y'_i=x_i\oplus r_k$, and $P_0$ can output $y_i = y'_i$.
Therefore, we halve the communication and computation cost in replication 1-switch.

\noindent $\bullet$ \emph{Optimization 3 - Shares Type}:
OEP is designed for arithmetic shares, which typically requires a larger bits (\eg, 128 bits) in the circuit. By observing  $\langle \mathbf{q}_j \rangle^B$ are 
boolean shares, we replace all the random arithmetic numbers $r$ (16 bytes) used in the 1-switch with random boolean values (one byte).
Since these random values is used in $\mathcal{F}_{\mbox{OT}}$ and $\mathbf{T}$, 
we achieve a substantial $16\times$ reduction in the communication cost. 


\subsubsection{Communication-efficient OFA}
With the integration of the three aforementioned optimizations, 
we present our highly efficient OFA protocol.
The workflow is shown in Fig.~\ref{fig:OFA_workflow}, and the functionality and detailed protocol are shown in Fig.~\ref{fig:OFA}

\algnewcommand\algorithmicsenderinput{\textbf{$P_0$'s Input:}}
\algnewcommand\SenderInput{\item[\algorithmicsenderinput]}
\algnewcommand\algorithmicreceiverinput{\textbf{$P_1$'s Input:}}
\algnewcommand\ReceiverInput{\item[\algorithmicreceiverinput]}
\algnewcommand\algorithmicpublicinput{\textbf{Public Input:}}
\algnewcommand\PublicInput{\item[\algorithmicpublicinput]}

\begin{figure}[t!]
\begin{tcolorbox}[colback = white!5!white, left = 0mm, right = 0mm, top = 0mm, bottom = 0mm, title = Functionality $\mathcal{F}_{\mbox{OFA}}$]
\begin{algorithmic}[1]
	\SenderInput Boolean shares $\langle \mathbf{q} \rangle^B_0$ of size $(1+\epsilon)n$, the extended permutation mapping $\pi:[0, (1+\epsilon)n)\rightarrow[0,n)$
        \ReceiverInput Boolean shares $\langle \mathbf{q} \rangle^B_1$ of size $(1+\epsilon)n$
        \Ensure $P_l$ ($l=0/1$) learns $\langle \bar{\mathbf{q}} \rangle^B_l$ of size $n$, s.t. $\bar{\mathbf{q}}=\pi(\mathbf{q})$.
\end{algorithmic}
\end{tcolorbox}

\begin{tcolorbox}[colback = white!5!white, left = -1mm, right = 0mm, top = 0mm, bottom = 0mm, title = Protocol $\varPi_{\mbox{OFA}}$]
    \begin{algorithmic}[1]
        \Statex \Comment{$P_0$ derives the selection bits $\mathbf{b}$ of $\pi$ and learns a random encoding of the whole network with $\mathbf{b}$.}
        \State $P_l$: constructs the network of three phases via 1-switch.
        \State $P_0$: derives $\mathbf{b}$ of these switches via DFS.
        \For {network of dummy-value placement phase (OPN), replication phase (ORN), permutation phase (OPN)}
        \For {each wire $i$ of the network} \label{fig:OFA_L4}
        \State $P_1$: generates a random bit $r_i \sample \{0,1\}$.
        \EndFor 
        \For{each switch $i$ of the network}
            \State $P_1$: creates and sends table $\mathbf{T}$ as Table~\ref{tab:output_table_1_switch}  to $\mathcal{F_{\mbox{OT}}}$.
            \State $P_0$: sends $\mathbf{b}[i]$ to $\mathcal{\mathcal{F}_{\mbox{OT}}}$ and gets $\mathbf{T}[\mathbf{b}[i]]$. \label{fig:OFA_L8}
        \EndFor
        \EndFor
        \Statex \Comment{$P_1$ blinds its input shares}
        \State $P_1$: sends $y'_i = \langle \mathbf{q} \rangle^B_1[i] \oplus r_i$ to $P_0$ for all input wire $w_i$. \label{fig:OFA_L9}
        \Statex \Comment{$P_0$ evaluates the random encoding of the whole network on $P_1$’s blinded inputs.}
        \State $P_0$:  $y_i = y'_i  \oplus \langle \mathbf{q}_i \rangle^B_0=\mathbf{q}[i] \oplus r_i$ for all input wire $w_i$.
        \State In topological order, for each switch with selection bit $b$, input wires $w_i, w_j$ and output wires $w_k, w_l$, $P_0$ does:
        \If {switch is a replication switch}
        \If{$b=0$}  $y_k=y_i, y_l = y_j \oplus \mathbf{T}[b,j]$ 
        \Else~$y_k=y_i, y_l=y_i\oplus \mathbf{T}[b,j]$
        \EndIf
        \EndIf
        \If {switch is a permutation switch}
        \If{$b=0$} $y_k= y_i\oplus \mathbf{T}[b,i], y_l = y_j \oplus \mathbf{T}[b,j]$ 
        \Else~$y_k= y_j \oplus \mathbf{T}[b,j], y_l = y_i \oplus \mathbf{T}[b,i]$ \EndIf
        \EndIf
        \Statex \Comment{$P_0$ unblinds and retrieves the output}
        \State $P_0$: blinds $y_k$ with its output shares for each output switch $h (0\leq h < n/2)$ with output wires $w_k, w_l$:  $\langle\bar{\mathbf{q}}\rangle_0^B[2h]=y_k \oplus  \pi(\langle\mathbf{q}\rangle^B_0)[2h], \langle\bar{\mathbf{q}}\rangle_0^B[2h+1]=y_l\oplus\pi(\langle\mathbf{q}\rangle^B_0)[2h+1]$.
        \State $P_1$: outputs the randoms $r_k,r_l$ of output wires $w_k, w_l$ in the network of permutation phase, $\langle  \bar{\mathbf{q}} \rangle^B_1[2h]=r_k, \langle \bar{\mathbf{q}} \rangle^B_1[2h+1]=r_l$ for the $h$-th output switch. \label{fig:OFA_L19}
    \end{algorithmic}
\end{tcolorbox}
\caption{Ideal functionality and protocol of OFA}
\label{fig:OFA}
\end{figure}

\begin{figure}[t!]
\begin{tcolorbox}[colback = white!5!white, left = 0mm, right = 0mm, top = 0mm, bottom = 0mm, title = Functionality $\mathcal{F}_{\mbox{\sys}}$]
\begin{algorithmic}[1]
	\SenderInput Database $\mathbf{V}^0$ with $n$ records.
        \ReceiverInput Database $\mathbf{V}^1$  with $n$ records.
        \PublicInput $B, R$ in LSH, threshold $t$ in score function for records.
        \Ensure $P_l$ ($l=0/1$) gets the collaboration value $\langle c \rangle^B_l$.
\end{algorithmic}
\end{tcolorbox}

\begin{tcolorbox}[colback = white!5!white, left = -1mm, right = 0mm, top = 0mm, bottom = 0mm, title = Protocol $\varPi_{\mbox{\sys}}$]
    \begin{algorithmic}[1]
        \Statex \Comment{Feature Engineering Module}
        \State $P_l$ locally preprocess $\mathbf{V}^l$ to generate $\mathbf{\bar{V}}^l$, which contains $n$ records and $Bm$ derived attributes. Specifically, it can derive new attributes via concatenate attributes to support schema-aware (Section.~\ref{sec:s4rl_schema_aware}), or can apply LSH to expand an attribute to multiple band signatures for approximate matching (Section.~\ref{sec:design_non_exact_matching}).
        \Statex \Comment{Circuit-PSI}
        \For{each derived attribute $j$ in $\mathbf{\bar{V}}^l$}
        \State $P_0$ takes role of receiver, $P_1$ takes role of sender, invokes $\mathcal{F}_{\mbox{CPSI}}$, $P_l$ gets the Boolean shares indicates the presence of the attribute value in  the intersection $\langle\mathbf{q}_j \rangle^B_l=\mathcal{F}_{\mbox{CPSI}}(\{\bar{\mathbf{V}}^b[:,j]\},\{\bar{\mathbf{V}}^1[:,j]\})$, $P_0$ also gets the mapping of Cuckoo hash $\pi^C_j$. \label{alg:functionality_sys:L4}
        \EndFor
        \Statex \Comment{Oblivious Feature Alignment}
        \For{each derived attribute $j$ in $\mathbf{\bar{V}}^l$}
        \State Both parties jointly invoke $\mathcal{F}_{\mbox{OFA}}$ and $P_l$ gets the aligned attribute intersection result $\langle\bar{\mathbf{q}}_j \rangle^B_l=\mathcal{F}_{\mbox{OFA}}(\{\langle \mathbf{q} \rangle^B_0, {\pi^C_j}^{-1} \},\{\langle \mathbf{q} \rangle^B_1\})$
        \EndFor
        \Statex \Comment{Score Function Module}
        \For {each record $i$}
        \State $P_l$ calculates the decision of linkage $\langle \mathbf{d} \rangle^B_l[i]$ according to Section~\ref{sec:score_records} via $\mathcal{F}_{\mbox{MUX}}$, $\mathcal{F}_{\mbox{MSB}}$, $\mathcal{F}_{\mbox{AND}}$
        \EndFor
        \State $P_l$ gets collaboration value $\langle c \rangle^A_l=\sum_i \mathcal{F}_{\mbox{B2A}}(\langle \mathbf{d} \rangle^B_l)[i]$
    \end{algorithmic}
\end{tcolorbox}
\caption{Ideal functionality and protocol of \sys}
\label{fig:functionality_sys}
\end{figure}

\subsection{Score Function Module}

After executing OFA, all intersection results, $\langle \bar{\mathbf{q}} \rangle^B$, across various attributes are aligned to match the original order of records in $\mathbf{V}^0$.
Then, \sys\ can determine whether each record is linked via a score function, and calculate the value of the collaboration between the two parties. 


\subsubsection{Scoring Function for Records}
\label{sec:score_records}
\sys\ accommodates various polynomial score functions, providing flexibility in determining how records are linked. Below are two exemplary scoring models supported by \sys:

\noindent $\bullet$ \textbf{Simple linear model}. In this model, a record is linked iff all its attributes are matched. 
The matching result for the $i$-th record can be determined using  $\mathcal{F}_{\mbox{AND}}$ across all attribute $\langle \mathbf{d} \rangle^B[i] = \bigwedge_j{\langle \bar{\mathbf{q}}_j \rangle^B}[i]$, where $j$ iterates over all attributes.

\noindent $\bullet$ \textbf{General linear/logistic regression model}. 
A linear/logistic regression is applied to score the records. 
It assigns different attributes different weights, \ie, the more selective an attribute is, the more it contributes to the matching decision.
A record is determined to be linked by comparing the score with a pre-determined threshold $t$. 
The threshold and the weights can be decided by an offline machine learning algorithm trained on public/private data. 
The specific workflow shows as follows. (i) Weight Sharing: $P_0$ generates the arithmetic shares of the weights, and sends one share to $P_1$;
(ii) Score Calculation: Both parties compute the score for the $i$-th record as $\langle \mathbf{s} \rangle^A[i] = \sum_j (\mathcal{F}_{\mbox{MUX}}(\langle w^e_j \rangle^A, \langle \bar{\mathbf{q}}_j \rangle^B[i]) + \mathcal{F}_{\mbox{MUX}}(\langle w^n_j \rangle^A, \neg \langle \bar{\mathbf{q}}_j \rangle^B[i]))$, where $w^e_j$ denotes the weight of the $j$-th attribute when it is matched, and $w^n_j$ for unmatched;
(iii) Decision Making: The linkage decision for the $i$-th record is determined by comparing the score with the threshold, $\langle \mathbf{d} \rangle^B[i] = \mathcal{F}_{\mbox{MSB}}(\langle \mathbf{s} \rangle^A[i] - \langle t \rangle^A)$.

The former model offers a faster method, while the latter is more fine-grained to consider varying importances of different attributes. 
Both models are executed securely through MPC. 

\subsubsection{Score Function for Collaboration Value}
The collaboration value $c$ is a statistical aggregated value over the results of linked records. 
Finding an appropriate scoring function is an open question and depends on the specific applications.
This is orthogonal to \sys, which focuses on providing a flexible framework to support different scoring functions. 
For simplicity, we adopt cardinality counting for scoring collaboration value, \ie, $\langle c \rangle^A=\sum_i{\mathcal{F}_{\mbox{B2A}}(\langle \mathbf{d} \rangle^B[i])}$.

\subsection{Practical Consideration}

\subsubsection{Handling Missing Value}
\label{sec:design_practical_missing_value}
In many applications, data is incomplete,~\ie, some attribute values are missing.
However, the output of circuit-PSI is restricted to only two cases:
match or unmatch. 
This causes the attribute with the missing value to eventually be deemed a match, resulting in an elevated false positive rate.
Note that the receiver ($P_0$) in circuit-PSI is aware of which values are missing, as it inserts the data using Cuckoo hashing and has access to $\pi^C$. It can adjust the score of the records having the missing attribute value.
Taking the linear/logistic model as an example, we denote the weight for missing values as $w^m_j$ for attribute $j$. After both parties have computed $\langle \mathbf{s}\rangle^A$, if attribute value $\bar{\mathbf{V}}[i,j]$ of record $i$ is absent, $P_0$ can adjust the score via $\langle \mathbf{s}\rangle^A_0[i] = \langle \mathbf{s}\rangle^A_0[i] + w^m_j$.

\subsubsection{Supporting Approximate Matching}
\label{sec:design_non_exact_matching}
Section~\ref{sec:screeing4rl} introduces to support the approximate matching via LSH.
To incorporate it in \sys, we modify the following: 
(i) $P_0$ and $P_1$ synchronize parameters $B$, $R$;
(ii) They extract $q$-grams for each attribute values;
(iii) Each party creates a new $\bar{\mathbf{V}}$ by expanding one attribute to $B$ band signatures (detailed in Section~\ref{sec:screeing4rl}), increasing the number of derived attributes from $m$ to $Bm$;
(iv) Both parties invoke the circuit-PSI per attribute, \ie, $Bm$ instances of circuit-PSI are executed in total;
(v) OFA is used to align the intersection result of all derived attributes.
(vi) In the score function module, the original $j$-th attribute (in $\mathbf{V}$) of record $i$ is linked iff 
one of its $B$ derived attributes (band signatures) is matched
$\langle \mathbf{\bar{q}'}_j \rangle^B[i]= \bigvee_{k=j\cdot B}^{(j+1)B-1}\langle \bar{\mathbf{q}}_{k}\rangle[i]$, which can be securely realized via $\mathcal{F}_{\mbox{OR}}$. And the decision $\mathbf{d}$ of the record is linked is calculated by $\langle \mathbf{\bar{q}'}_j \rangle^B$.

\subsection{Putting All Things Together}
By assembling all the modules, we get \sys, the ideal functionality and the protocol is shown in Fig.~\ref{fig:functionality_sys}. 
\section{Analysis}

\subsection{Security Analysis}
In this section, we provide sketches of proof that OFA and \sys\ are secure according to~\Cref{def:2pc}.



\begin{theorem}
\label{theorem:OFA}
$\varPi_{\mbox{OFA}}$ is a secure protocol under $\mathcal{F}_{\mbox{OT}}$-hybrid.
\end{theorem}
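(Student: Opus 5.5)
We prove the theorem with Definition~\ref{def:2pc}: in the $\mathcal{F}_{\mbox{OT}}$-hybrid model we build a simulator for each party from that party's input and output alone. Theorem~\ref{theorem:composition} then lets us replace $\mathcal{F}_{\mbox{OT}}$ by any secure OT protocol. The deterministic functionality is $\mathcal{F}_{\mbox{OFA}}$, where $\bar{\mathbf{q}}=\pi(\mathbf{q})$ is output as XOR shares. We treat the two possible corruptions separately.

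\emph{Corrupted $P_1$.} In $\varPi_{\mbox{OFA}}$, $P_1$ sends the tables $\mathbf{T}$ to $\mathcal{F}_{\mbox{OT}}$ and sends the blinded inputs $y'_i$ to $P_0$. It receives nothing back, since $\mathcal{F}_{\mbox{OT}}$ gives the sender no output. Its view is therefore just its input $\langle\mathbf{q}\rangle^B_1$ and its own random tape, the wire bits $r_i$. The simulator $\mathcal{S}_1$ samples the wire bits uniformly and runs $P_1$'s code honestly. The only point to check is that $P_1$'s output $\langle\bar{\mathbf{q}}\rangle^B_1$, the bits $r_k,r_l$ on the final output wires, is distributed jointly with $P_0$'s output as in the ideal world. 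Each final output-wire random is a fresh uniform bit used by no other output wire, so $P_1$'s output share is uniform. $P_0$'s share is then determined by correctness. Consistency with the ideal output distribution therefore follows from correctness, which we verify below.

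\emph{Corrupted $P_0$.} Its view contains its input $(\langle\mathbf{q}\rangle^B_0,\pi)$, which fixes the selection bits $\mathbf{b}$ through the deterministic DFS. It also contains the messages $y'_i=\langle\mathbf{q}\rangle^B_1[i]\oplus r_i$ for every input wire, and the OT outputs $\mathbf{T}[\mathbf{b}[s]]$ for every switch $s$ in the three networks. The simulator $\mathcal{S}_0$ gets $P_0$'s input and output $\langle\bar{\mathbf{q}}\rangle^B_0$. It outputs uniformly random bits for every $y'_i$ and for every $\mathbf{T}[\mathbf{b}[s]]$ entry, with one exception: the OT entries of the last switches of the final permutation network are forced so that $P_0$'s evaluation yields exactly $\langle\bar{\mathbf{q}}\rangle^B_0$. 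Concretely, for each output wire $\mathcal{S}_0$ sets the last table entry to the target share XORed with the current label and with $\pi(\langle\mathbf{q}\rangle^B_0)$.

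The main obstacle is showing that the real view is distributed identically to this, i.e.\ that $P_0$'s view is uniform apart from its output. We argue wire by wire in topological order, in the style of a garbling argument. Each received value is a label of the form $r_i\oplus r_k$, with $r_k$ the random on an output wire of that switch. Each switch reveals exactly one table row. For permutation switches, the revealed row uses each output-wire random exactly once. For replication switches under Optimization~2, the top output reuses $r_k$ directly as the blinding and $P_0$ only sees $\mathbf{T}[b,j]=r_{\cdot}\oplus r_l$, so each output-wire random again appears in exactly one revealed value. Inductively, every revealed value contains a fresh output-wire random not used earlier in the view, so it is uniform and independent of everything before it. The exception is the last layer, whose randoms are $P_1$'s output shares and are therefore fixed jointly with $P_0$'s output. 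This is precisely what $\mathcal{S}_0$ reproduces.

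Finally we check correctness, which gives joint indistinguishability with the outputs. By induction on the switches, every wire $w$ carries $y_w=v_w\oplus r_w$, where $v_w$ is the value the plaintext network puts on $w$ under $\mathbf{b}$. By the DFS construction in~\cite{mohassel2013hide}, together with the Optimization~1 truncation, the three phases compose to the map $\pi$. XORing with $\pi(\langle\mathbf{q}\rangle^B_0)$ then gives valid shares of $\pi(\mathbf{q})$, since $\pi$ acts coordinatewise and XOR is linear. The views are identically distributed in the hybrid model, and Theorem~\ref{theorem:composition} carries this over to computational indistinguishability once $\mathcal{F}_{\mbox{OT}}$ is instantiated.
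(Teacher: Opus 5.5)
Your proposal is correct. It shares the paper's skeleton: one simulator per corrupted party in the $\mathcal{F}_{\mbox{OT}}$-hybrid model, then Theorem~\ref{theorem:composition}. The step that does the real work is different, though.

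The paper's simulators essentially re-run the other party's honest code. For corrupted $P_0$, the simulator builds every table itself and blinds $\langle\mathbf{q}\rangle^B_1$, which is not one of its inputs. For corrupted $P_1$, it derives $\mathbf{b}$ from $\pi$. Indistinguishability is then asserted through hybrids that replace each switch by $\mathcal{F}_{\mbox{OT}}$. Those hybrids never explain why the rows $\mathbf{T}[\mathbf{b}[s]]$ that $P_0$ obtains are independent of $\langle\mathbf{q}\rangle^B_1$ and of one another.

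Your simulators use only the corrupted party's own input and output. Your fresh-label argument is the actual reason $P_0$'s view is uniform: each revealed entry $r_{\mathrm{in}}\oplus r_{\mathrm{out}}$ contains an output-wire bit not used earlier. In effect the revealed entries form a forest on the wire labels, and this still holds under the label reuse of Optimization~2. By programming the last layer you also cover the joint distribution of view and outputs, which the view-only Definition~\ref{def:2pc} does not ask for. The paper's version is shorter; yours supplies the argument it omits.

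Two small repairs are needed.

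First, $P_1$'s output share is part of its random tape. So $\mathcal{S}_1$ must set the final output-wire bits equal to the ideal share $\langle\bar{\mathbf{q}}\rangle^B_1$ it receives, and sample only the remaining wire bits. Sampling all of them independently breaks consistency between the simulated view and the output. Your remark that these bits are fresh is exactly what justifies this programming.

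Second, your correctness induction only works if $v_w$ is the plaintext network applied to $\langle\mathbf{q}\rangle^B_1$, not to $\mathbf{q}$. As literally written, $\varPi_{\mbox{OFA}}$ XORs in $\langle\mathbf{q}\rangle^B_0$ twice: once in $y_i=y'_i\oplus\langle\mathbf{q}\rangle^B_0[i]$ and again in the final unblinding. That would yield shares of $\pi(\langle\mathbf{q}\rangle^B_1)$ rather than $\pi(\mathbf{q})$. You should state that the input-wire XOR is dropped, as the prose describing OEP intends and as your last-layer formula for $\mathcal{S}_0$ already assumes. Similarly, the protocol's evaluation rule for a permutation switch with $b=1$ pairs the entries of Table~\ref{tab:output_table_1_switch} the wrong way round. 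Your induction tacitly uses the consistent pairing, and you should say so.
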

\begin{proof}

\textbf{Case 1:  $P_0$ is corrupted.}

The simulator $\mathcal{S}$ constructs the network of three phases via 1-switch and interacts with $P_0$ as follows. 

\noindent \textbf{Step 1}. $\mathcal{S}$ assigns random bits for each wire in the network of the dummy-value placement phase and constructs the table $\mathbf{T}$ of each permutation switch. Then, $\mathcal{S}$ works as sender in $\mathcal{F}_{\mbox{OT}}$. When $P_0$ sends selection bit $b$ to $\mathcal{F}_{\mbox{OT}}$, $\mathcal{S}$ sends $\mathbf{T}$ to $\mathcal{F}_{\mbox{OT}}$. 

\noindent \textbf{Step 2}. $\mathcal{S}$ assigns random bits for each wire in the network of the replication phase and constructs the table $\mathbf{T}$ of each replication switch. Then, $\mathcal{S}$ works as sender in $\mathcal{F}_{\mbox{OT}}$. When $P_0$ sends selection bit $b$ to $\mathcal{F}_{\mbox{OT}}$, $\mathcal{S}$ sends $\mathbf{T}$ to $\mathcal{F}_{\mbox{OT}}$. 

\noindent \textbf{Step 3}. $\mathcal{S}$ assigns random bits of each wire in the network of the permutation phase and constructs the table $\mathbf{T}$ of each permutation switch. Then, $\mathcal{S}$ works as sender in $\mathcal{F}_{\mbox{OT}}$. When $P_0$ sends selection bit $b$ to $\mathcal{F}_{\mbox{OT}}$, $\mathcal{S}$ sends $\mathbf{T}$ to $\mathcal{F}_{\mbox{OT}}$. 

\noindent \textbf{Step 4}. $\mathcal{S}$ XOR the input vector $\langle \mathbf{q}\rangle^B_1$ with random bits in Step 1 and sends to $P_0$, and outputs the random bits of output wires in the permutation phase (OPN). 

To prove that this simulation is indistinguishable from the real protocol, we consider the following hybrid worlds. 

\noindent $\mathbf{Hybrid_0}$ The same as the real protocol. 

\noindent $\mathbf{Hybrid_1}$ Each permutation switch in dummy-value placement phase (Line~\ref{fig:OFA_L4}--~\ref{fig:OFA_L8}, $\varPi_{\mbox{OFA}}$) is replaced by $\mathcal{F}_{\mbox{OT}}$

\noindent $\mathbf{Hybrid_2}$ 
Each replication switch in replication phase (Line~\ref{fig:OFA_L4}--~\ref{fig:OFA_L8}, $\varPi_{\mbox{OFA}}$) is replaced by $\mathcal{F}_{\mbox{OT}}$ 

\noindent $\mathbf{Hybrid_3}$ Each permutation switch in permutation phase (Line~\ref{fig:OFA_L4}--~\ref{fig:OFA_L8}, $\varPi_{\mbox{OFA}}$) is replaced by $\mathcal{F}_{\mbox{OT}}$.

\noindent $\mathbf{Hybrid_4}$ 
The execution of $\mathcal{S}$ in step 4 is identical to Line~\ref{fig:OFA_L9}--\ref{fig:OFA_L19} in $\varPi_{\mbox{OFA}}$. 

Note that $\mathbf{Hybrid_1}$, $\mathbf{Hybrid_2}$ and $\mathbf{Hybrid_3}$ is computationally indistinguishable from real protocol as $\mathcal{F}_{\mbox{OT}}$ is secure. Therefore, the simulation ($\mathbf{Hybrid_4}$) is computationally indistinguishable from the view of real protocol ($\mathbf{Hybrid_0}$).


\noindent \textbf{Case 2:  $P_1$ is corrupted.}

The simulator $\mathcal{S}$ constructs the network of three phases via 1-switch and derives $\mathbf{b}$ of these switches via DFS. Then, the interaction between $\mathcal{S}$ and $P_1$ is similar to the situation when $P_0$ is corrupted, except $\mathcal{S}$ works as receiver in $\mathcal{F}_{\mbox{OT}}$. 

To sum up, in both cases, The $\mathcal{F}_{\mbox{OT}}$-hybrid OFA protocol is secure in the semi-honest model.
\end{proof}

\begin{theorem}
\label{thm:appraisal}
The $\mathcal{F}_{\mbox{OT}}$-hybrid $\varPi_{\mbox{\sys}}$ is secure. 
\end{theorem}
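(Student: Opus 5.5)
We argue in the hybrid model and then invoke the Composition theorem (Theorem~\ref{theorem:composition}). First we view $\varPi_{\mbox{\sys}}$ as a protocol in the $(\mathcal{F}_{\mbox{CPSI}}, \mathcal{F}_{\mbox{OFA}}, \mathcal{F}_{\mbox{AND}}, \mathcal{F}_{\mbox{OR}}, \mathcal{F}_{\mbox{MUX}}, \mathcal{F}_{\mbox{MSB}}, \mathcal{F}_{\mbox{B2A}})$-hybrid model. Each of these functionalities has a known realization in the $\mathcal{F}_{\mbox{OT}}$-hybrid model: circuit-PSI via OPPRF/OT-based membership testing~\cite{pinkas2019efficient, chandran2022circuit, rindal2021vole}, $\mathcal{F}_{\mbox{OFA}}$ by Theorem~\ref{theorem:OFA}, and the boolean/arithmetic gadgets via OT-based constructions~\cite{demmler2015aby, rathee2020cryptflow2}. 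Composition then lifts security from this hybrid model to the $\mathcal{F}_{\mbox{OT}}$-hybrid protocol. It therefore suffices to build, for each corrupted party $P_l$, a simulator $\mathcal{S}_l$ in the hybrid model that is given only $P_l$'s input $\mathbf{V}^l$ and its output share $\langle c\rangle^A_l$.

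The simulator for a corrupted $P_l$ proceeds as follows.
\begin{itemize}
\item The Feature Engineering Module is purely local, so $\mathcal{S}_l$ runs it honestly on $\mathbf{V}^l$ to obtain $\bar{\mathbf{V}}^l$.
\item For each of the $Bm$ calls to $\mathcal{F}_{\mbox{CPSI}}$, $\mathcal{S}_l$ returns a fresh uniformly random boolean vector $\langle\mathbf{q}_j\rangle^B_l$ of length $(1+\epsilon)n$.
\item If $l=0$, it also returns the Cuckoo mapping $\pi^C_j$. This mapping is a deterministic function of $P_0$'s own column $\bar{\mathbf{V}}^0[:,j]$ and the public hash functions (plus $P_0$'s local randomness), so $\mathcal{S}_0$ can compute it itself.
\item For each call to $\mathcal{F}_{\mbox{OFA}}$, $\mathcal{S}_l$ returns fresh uniform shares $\langle\bar{\mathbf{q}}_j\rangle^B_l$ of length $n$.
\item In the Score Function Module, every call to $\mathcal{F}_{\mbox{AND}}$, $\mathcal{F}_{\mbox{OR}}$, $\mathcal{F}_{\mbox{MUX}}$, $\mathcal{F}_{\mbox{MSB}}$ and $\mathcal{F}_{\mbox{B2A}}$ is answered with fresh uniform shares in $\mathbb{Z}_2$ or $\mathbb{Z}_{2^l}$.
\item The one exception is the final aggregation. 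There $\mathcal{S}_l$ chooses the last $\mathcal{F}_{\mbox{B2A}}$ outputs so that $\sum_i \langle\mathbf{d}\rangle^A_l[i]$ equals the given $\langle c\rangle^A_l$. It does this by sampling all but one summand uniformly and fixing the last one.
\item For the linear/logistic model with $l=1$, the weight shares that $P_0$ sends are simulated as uniform arithmetic values. For $l=0$, the missing-value adjustment is a local operation on $P_0$'s own data and share.
\end{itemize}

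Indistinguishability follows from a sequence of hybrids that replace one ideal-call output at a time. In the real hybrid execution, every output a party receives from $\mathcal{F}_{\mbox{CPSI}}$, $\mathcal{F}_{\mbox{OFA}}$ or a gate functionality is a single share of a two-out-of-two sharing with fresh randomness. Taken alone, each such share is uniform and independent of everything else in that party's view, including its input and all earlier shares. The only joint constraint is that the shares sum (or XOR) to the reconstructed output. The local computations (addition, XOR, NOT, the missing-value adjustment) are deterministic functions of values already in the view, so they are reproduced exactly. Finally, the final $\langle c\rangle^A_l$ together with the last B2A summands has the same distribution as in the real execution, so the two views are identically distributed.

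I expect the main obstacle to be the additional leakage to $P_0$ in the CPSI step, namely the Cuckoo mapping $\pi^C_j$. The proof must show that this mapping depends only on $P_0$'s own data and the public hash functions, not on $P_1$'s set. Otherwise it could correlate with which items intersect, which is exactly the concern illustrated in Fig.~\ref{fig:problem_send_permutation}. I would handle this by fixing the convention that the Cuckoo insertion (including eviction randomness) is run locally by the receiver before any interaction. Under that convention, $\pi^C_j$ is part of $P_0$'s local computation rather than an output of $\mathcal{F}_{\mbox{CPSI}}$.

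A second point to check is that the sequential composition respects the "one ideal call per round" condition in Theorem~\ref{theorem:composition}. The per-attribute CPSI and OFA calls are independent, so I would serialize them for the purposes of the proof.
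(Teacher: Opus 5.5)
Your proposal is correct and follows essentially the same route as the paper. Both view $\varPi_{\mbox{\sys}}$ in the $(\mathcal{F}_{\mbox{CPSI}},\mathcal{F}_{\mbox{OFA}},\mathcal{F}_{\mbox{MUX}},\mathcal{F}_{\mbox{MSB}},\mathcal{F}_{\mbox{AND}},\mathcal{F}_{\mbox{B2A}})$-hybrid model, simulate each ideal call, and conclude via Theorem~\ref{theorem:composition}. The paper gives only a sketch, so your details fill in what it leaves implicit without changing the argument: uniform shares for every call, the last $\mathcal{F}_{\mbox{B2A}}$ share fixed to match $\langle c\rangle^A_l$, simulated weight shares, and the observation that $\pi^C_j$ is computable from $P_0$'s own column.
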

\begin{proof}[Proof sketch]
$\varPi_{\mbox{\sys}}$ consists of $mB$ $\mathcal{F}_{\mbox{CPSI}}$, $mB$ $\mathcal{F}_{\mbox{OFA}}$, 
\emph{Score Function} including $\mathcal{F}_{\mbox{MUX}}$, $\mathcal{F}_{\mbox{MSB}}$, $\mathcal{F}_{\mbox{AND}}$, and $\mathcal{F}_{\mbox{B2A}}$. 
All of these functionalities are secure 2PC $\mathcal{F}_{\mbox{OT}}$-hybrid protocols in the semi-honest model. 
Thus, we can construct a simulator for \sys~using the simulators of $\mathcal{F}_{\mbox{CPSI}}$, $\mathcal{F}_{\mbox{OFA}}$, $\mathcal{F}_{\mbox{MUX}}$, $\mathcal{F}_{\mbox{MSB}}$, $\mathcal{F}_{\mbox{AND}}$ and $\mathcal{F}_{\mbox{B2A}}$, which are similar to the proof of~\Cref{theorem:OFA} and conclude that \sys~is also secure given Theorem~\ref{theorem:composition}. 
\end{proof}

\subsection{Performance Analysis}
\subsubsection{\sys} 
The analysis of time and communication complexity is as follows:

\noindent $\bullet$ \emph{Feature Engineering}. It runs locally without any communication. The time complexity is $\mathcal{O}(nm)$ for exact matching and $\mathcal{O}(Bnm)$ for approximate matching.

\noindent $\bullet$ \emph{Circuit-PSI}. The time and communication complexity of the protocol is $\mathcal{O}(n)$. For the schema-aware and approximate matching setting, it is $\mathcal{O}(nmB)$.

\noindent $\bullet$ \emph{OFA}. 
The time and communication complexity per derived attribute (and/or band signature) is $\mathcal{O}(n \log n)$ .

\noindent $\bullet$ \emph{Score Function}. The time and communication complexity per derived attribute (and/or band signature) is $\mathcal{O}(n)$.

In total, the time and communication complexity of \sys\ is $\mathcal{O}(mn \log n)$, or it is $\mathcal{O}(Bmn \log n)$ for approximate matching. Moreover, only circuit-PSI involves secure comparison, which is $\mathcal{O}(Bmn)$.

\subsubsection{Screening-then-Linkage}
The time for the pure PPRL scheme is $N\cdot t_{\mbox{PPRL}}$ where $t_{\mbox{PPRL}}$ is the time for executing the PPRL. 
The time for \emph{Screening-then-Linkage} is $N\cdot t_{\mbox{PPRS}}+ \alpha N \cdot t_{\mbox{PPRL}}$ where  $t_{\mbox{PPRS}}$ is the time for executing the PPRS and $\alpha$ is the ratio of the parties with high collaboration value. 

\section{Evaluation}

In this section, we first outline the implementation. Then, we present the evaluation setup, and show the effectiveness and efficiency of \sys\ and \emph{Screening-then-Linkage} framework.



\subsection{Implementation}
\label{sec:eva_impl}
We conducted data extraction, cleaning, and ground truth determination using Python 3 and implemented \sys\ system in C++ using the following libraries: 
the \emph{Feature Engineering} and \emph{Score Function} modules were based on Cryptflow2~\cite{rathee2020cryptflow2}, while the $\mathcal{F}_{\mbox{OT}}$ was sourced from Cheetah~\cite{huang2022cheetah}.
The \emph{Circuit-PSI} module was adapted from~\cite{chandran2022circuit}.
For the \emph{OFA} protocol, we modified the OSN code from~\cite{jia2022shuffle} and~\cite{garimella2021private}. 
We set the computational security parameter to $\lambda = 128$ and the statistical security parameter to $\sigma = 40$ 
For the stashless circuit-PSI, 
the smaller $\epsilon$, the possibility of collisions will increase;
the larger $\epsilon$, the cost of both Circuit-PSI and OFA will increase.
We use $\epsilon = 0.27$ and $a = 3$~\cite{rindal2021vole} to achieve the balance between security and performance.

\renewcommand{\arraystretch}{0.95}
\begin{table}[t!]
    \centering
    \caption{The description of the evaluation datasets.}
    \begin{tabular}{|c|c|c|c|c|c|c|}
        \hline 
        Dataset & \# of records & \begin{tabular}{@{}c@{}}Duplication (\%) \\ for both data\end{tabular} & \begin{tabular}{@{}c@{}}Matching \\ Mode\end{tabular}\\ 
        \hline
        iDash500K & 500K/500K & 10\%/10\% & Exact \\
        \hline
        iDash1M & 1,000K/1,000K & 10\%/10\% & Exact \\
        \hline
        DBLP/ACM & 2,616/2,294 & 85\%/95\% & Approximate \\
        \hline
        BNB/TPL & 3,565K/1,443K  & 17.5\%/39.5\% & Mixed \\
        \hline
    \end{tabular}
    \label{tab:eva_datasets}
\end{table}

\subsection{Evaluation Setup}
\label{sec:eva_setup}
We conducted 
our evaluations on two instances, which play the roles of $P_0$ and $P_1$ respectively. Both instances are equipped with 
10 cores of 2.5GHz Intel Xeon Platinum 8255C CPU and 128GB of memory.
To ensure fairness, we matched the number of cores utilized in both instances to the number of threads used in \sys, choosing between 1 or 4 threads.

Our evaluation covers two network conditions:
(i) Local Area Network (LAN): bandwidth of 1Gbps and an approximate latency (ping time) of 2ms;
(ii) Wide Area Network (WAN): we use the $\mathsf{tc}$ command to artificially constrain the bandwidth to 100Mbps and set the ping time to 100ms.


\subsection{Evaluation on \sys}
\label{sec:eva_on_sys}

\subsubsection{Datasets}
The evaluation datasets are shown in Table~\ref{tab:eva_datasets}.

\noindent $\bullet$ \textbf{iDash500K/1M}~\cite{idash2022}: 
iDash, a competition dedicated to addressing privacy and security challenges in genome data analysis, introduced a task in 2022 to tackle secure record linkage for electronic health records. Each record encompassed 9 attributes: Social Security Number (SSN), gender, first name, last name, birth date, phone number, address, state, and email. The SSN served as a deterministic key, whereas the other attributes played a supplementary role in record matching due to the potential missing in the SSN. 
All attributes were hashed using SHA-256, therefore, exact matching was required for comparison. 
iDash supplied two datasets, each consisting of two subsets, A and B, with each subset containing 500K records. We merged the two 500K datasets to create a larger dataset of 1M records in each subset.
In evaluation, we adopt a logistic model in the~\emph{Score Function} module and evaluate the effectiveness (accuracy) with 5-fold cross validation.
When evaluating the efficiency (runtime), we use all records.


\noindent $\bullet$ \textbf{DBLP/ACM}~\cite{kopcke2010evaluation}: 
This dataset, sourced from DBLP and ACM, contains bibliographic data with 2,615 and 2,293 records, respectively~\cite{github_cfb}. It includes 4 attributes: title, authors, venue, and year. Variations in text sequencing and abbreviations may occur within a single record across the two datasets, necessitating the use of approximate matching.
In evaluation, we adopt the simple linear model in \emph{Score Function} module. We set $B=25, 18$ for schema-aware and schema-agnostic modes, respectively, in approximate matching, implying that the number of attributes expands to 50 and 18. 



\noindent $\bullet$ \textbf{BNB/TPL}:
The British National Bibliography (BNB)~\cite{bnb}, includes a catalog of books published in the UK or the Republic of Ireland, and the Toronto Public Library (TPL)~\cite{tpl}, a collection of published books in the Toronto Public Library, serve as the datasets for this study.
We have extracted three fields—ISBN, author, and title—from each dataset, eliminating any record missing all three. 
To establish the ground truth, we follow the methodology outlined in SFour~\cite{khurram2020sfour}. 
ISBN can serve as a deterministic key, however, it is frequently missing.
For records without ISBNs, we adopt the Levenshtein distance for the author and title fields, setting a similarity threshold of 80\%. 
As a result, around 17.5\% of BNB records and 39.5\% of TPL records are being linked. Notably, around 57.1\% of these linked pairs can be matched through exact matching based on ISBN, consistent with findings in~\cite{khurram2020sfour}.
Our evaluation employs mixed matching modes: exact matching for the ISBN, and fuzzy matching for the author and title fields.
In evaluation, we adopt the simple linear model in \emph{Score Function} module. We set $B=8$ for both schema-aware and schema-agnostic modes, extending the number of attributes to 17 and 9, respectively. 

\begin{figure}[t]
\centering
 \includegraphics[width=0.95\linewidth]{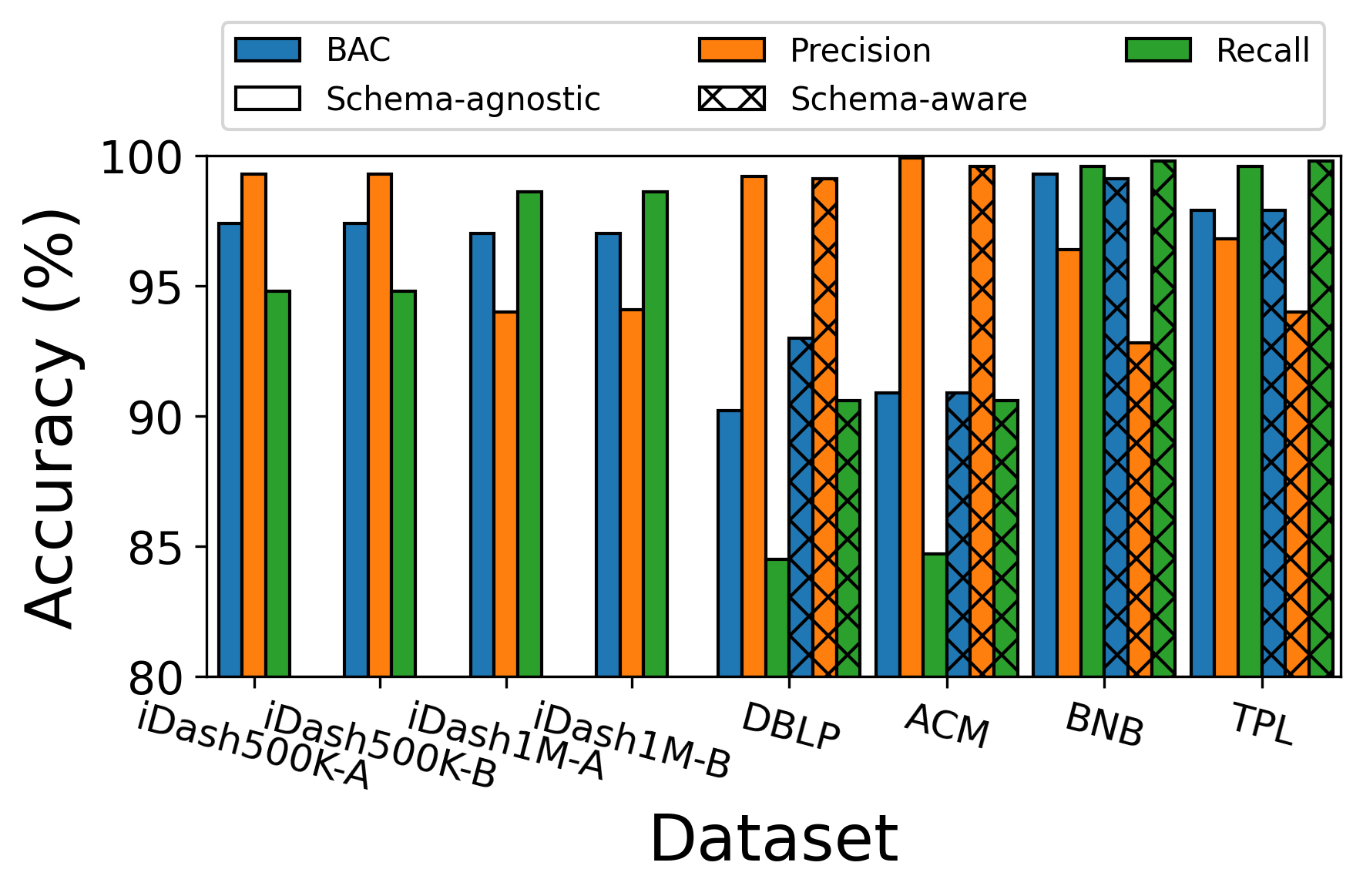}
\caption{Accuracy across various datasets, with $P_0$ utilizing one set of data from a given pair.}
\label{fig:eva_sys_accuracy}
\vspace{-5pt}
\end{figure}

\begin{table}[t!]
    \centering
    \caption{Runtime of \sys\ with 1/4 threads in seconds.}
    \begin{tabular}{|c|c|c|c|c|c|}
        \hline 
         &  \multicolumn{2}{c|}{Schema-aware} & \multicolumn{2}{c|}{Schema-agnostic} \\
        \hline
         dataset & LAN &WAN  &  LAN & WAN \\
        \hline
        iDash500K & 610/155 & 675/248 & / & /  \\
        iDash1M   & 1274/386 & 1331/492 & / & / \\
        DBLP   & 21/8 & 217/66 & 8/3 & 85/27  \\
        ACM  & 21/7 & 213/62 & 8/3 & 83/26  \\
        BNB & 5032/1710 & 5402/2025 & 2664/846 & 2861/990  \\
        TPL & 20165/9328 & 21317/9934 & 10558/4559 & 11281/4855  \\
        \hline
    \end{tabular}
    \label{tab:eva_sys_time}
\end{table}


\subsubsection{Evaluation Metrics}
To conduct a detailed analysis of \sys, we temporarily shift our system's objective to identifying whether a record is linked, rather than determining the collaboration value. This can be achieved by recovering $\langle \mathbf{d} \rangle^B$ 
within the \emph{Score Function} module. 
For effectiveness evaluation, we use metrics: Balanced Accuracy (BAC) $= 0.5 \times (\frac{TP}{TP+FN} + \frac{TN}{TN+FP}$); Precision $= \frac{TP}{TP+FP}$; Recall $= \frac{TP}{TP+FN}$ where $TP$, $TN$, $FP$, and $FN$ stand for true positives, true negatives, false positives, and false negatives, respectively.
In terms of efficiency, we measure its runtime, including both online and offline phases, under LAN and WAN.

\subsubsection{Results}
\label{sec:eva_sys_result}
Presented in Fig.~\ref{fig:eva_sys_accuracy}, the results are based on the dataset used by $P_0$ and alternating between the two datasets in a pair. 
This demonstrates that
\sys\ can achieve high accuracy when configured with appropriate parameters.

The runtimes of \sys\ with single-threaded and four-threaded configurations are displayed in~\Cref{tab:eva_sys_time}. These results confirm \sys's capability to efficiently process records at the scale of millions. Additionally, as the record count increases, the relative time difference between LAN and WAN scenarios decreases, indicating that the communication cost is no longer the primary bottleneck in the OFA protocol.

\subsubsection{Compare to Related Works}
\label{sec:eva_sys_compare_related_works}
In this section, we limit the number of threads to one and set the network condition to LAN for a fair comparison because some selected works either do not support multi-threading or have not implemented the network communication module in their open-source codes. Thus, we estimate their performance based on both their codes (if they exist) and the results reported in their papers.


\begin{table}[t!]
    \centering
    \caption{Estimated runtime (in hours) of PPRL and PPRS. }
    \begin{tabular}{|c|c|c|c|c|}
        \hline 
        Dataset & iDash500K & iDash1M & DBLP/ACM & BNB/TPL \\
        \hline
        \textbf{Ours} & $\mathbf{0.2}$  & $\mathbf{0.4}$  & $\mathbf{0.002}$/$\mathbf{0.002}$ & $\mathbf{0.7}$ / $\mathbf{2.9}$ \\        
        \cite{armknecht2023strengthening} & 4166.7 & 16666.7 & 0.1 & 34707.0 \\ 
        \cite{wei2023cryptographically} & 33.8 & 71.1 & 0.5 & 481.3 \\
        \cite{he2017composing}($\epsilon=0.1$) & 1456.8 & 3943.7  & 559.3& 436879.4 \\ 
        \cite{he2017composing}($\epsilon=0.4$) & 154.6 & 323.6 & 39.2 & 31034.8 \\ 
        \cite{he2017composing}($\epsilon=1.6$) & 36.0 & 75.6 & 3.6 & 3021.3 \\
        \hline
    \end{tabular}
    \label{tab:compare_pprl}
\end{table}

\begin{table}[t!]
    \centering
    \caption{The maximum $\alpha$  that \emph{Screening-the-Linkage} framework has an advantage over the pure PPRL framework. }
    \begin{tabular}{|c|c|c|c|c|}
        \hline 
        Dataset & iDash500K & iDash1M & DBLP/ACM & BNB/TPL \\
        \hline    
        \cite{armknecht2023strengthening} & 1.0 & 1.0 & 0.980/0.980 & 1.0/1.0 \\ 
        \cite{wei2023cryptographically} & 0.995 & 0.995 & 0.995/0.995 & 0.998/0.994 \\
        \cite{he2017composing}($\epsilon=0.1$) & 1.0 & 1.0  & 1.0/1.0& 1.0/1.0 \\ 
        \cite{he2017composing}($\epsilon=0.4$) &  0.999 & 0.999  & 1.0/1.0& 1.0/1.0 \\ 
        \cite{he2017composing}($\epsilon=1.6$) & 0.995 & 0.995 & 0.999/0.999 & 1.0/0.999 \\
        \hline
    \end{tabular}
    \label{tab:compare_framework}
\end{table}


We identify that SFour~\cite{khurram2020sfour} is PPRS
due to it outputs whether the record is linked instead of the linked pair. 
Compared to it, \sys\ demonstrates superior performance in terms of both effectiveness and efficiency. While SFour achieves slightly less than 90\% accuracy on the BNB/TPL dataset, \sys\ consistently maintains an accuracy rate exceeding 95\% across various settings. Furthermore, SFour takes approximately 385 minutes (77 minutes online and 308 minutes offline) to process 4096 records in a LAN environment. In contrast, \sys\ can efficiently handle up to 3.5 million records, as shown with the TPL dataset, in roughly 336 minutes under the same network conditions.


We also compare with the SOTAs on PPRL, BF-based (Bloom filter) PPRL~\cite{armknecht2023strengthening} and MPC-based PPRL~\cite{he2017composing,wei2023cryptographically}.

\noindent $\bullet$ BF-based PPRL: It comprises three components: diffusion, encoding, and linking. The diffusion and encoding processes are performed once per dataset, while the linking process is executed for every pair of records, resulting in time complexity of $\mathcal{O}(n^2)$. According to the reported results and our measurements using the available source code~\cite{github_spprld}, linking two records is the most consuming part, which takes around $6\times 10^{-5}$ ms. 


\noindent $\bullet$ MPC-based PPRL: This approach involves blocking records into bins using LSH, adding dummy elements in bins for privacy, and finally comparing the records in the respective bins from both parties. 
The difference lies: (i) He~\etal leverages differential privacy padding, while Wei~\etal~\cite{wei2023cryptographically} padding in a strategy of frequency smoothing; (ii) He~\etal adopts one-to-one secure comparison protocol while Wei~\etal adopts private join protocol (many-to-many secure comparison protocol).
Nevertheless, the cost is primarily determined by the number of secure pairwise comparisons. 
Our implementation of the comparison protocol takes around 0.13ms per comparison. 
Based on the analysis in~\cite{wei2023cryptographically}, the expected number of comparisons is approximately $B\ell(n/\ell+\eta)^2$ for both works (definitions are in Section~\ref{sec:problem_pprl}).
$\eta$ is different for two works, it is related to the different privacy parameter $\epsilon$ in He~\etal, while is $n/2\ell$ in Wei~\etal.
The specific parameters and settings are detailed in the original text.
For the iDash500K, iDash1M, DBLP/ACM, and BNB/TPL datasets, we tailored the parameters, $B=32$, to ensure the accuracy rate of their approaches is greater than 90\%.


For iDash datasets, we estimate the runtime of all works under the schema-aware setting, and for other datasets, we adopt the schema-agnostic setting.
The results are presented in Table~\ref{tab:compare_pprl}.
In the largest dataset BNB/TPL, the performance of \sys\ is exceptionally notable. Compared to the BF-based and MPC-based PPRL, \sys\ achieves a speedup factor of at least $11968 \times$ and $165\times$, respectively. 
These comparisons underscore the significant efficiency gains made possible by \sys, particularly when handling extensive datasets.

\subsection{Evaluation on Screening-then-Linkage}
\label{sec:eva_framework}
In this section, we compare the \emph{Screening-then-Linkage} framework with the framework solely based on PPRL.
We also use the same datasets, parameters, and environment for both PPRL and \sys\ as Section~\ref{sec:eva_sys_compare_related_works}.
In our setup, $P_0$ utilizes one dataset from a pair, while $P_1, ..., P_N$ employs the other dataset in that pair. We introduce $\alpha$, representing the proportion of parties possessing a high collaboration value ($c > \text{threshold}$).
To demonstrate the efficiency advantages of the \emph{Screening-then-Linkage} framework over pure PPRL framework, we compute the ratio of runtime between them, denoted as $\gamma$. Specifically, the runtime of the \emph{Screening-then-Linkage} framework is $T_1=N \cdot t_{\mbox{PPRS}} + \alpha \cdot N \cdot t_{\mbox{PPRL}}$.
Correspondingly, the running time of the pure PPRL framework is $T_2 = N \cdot t_{\mbox{PPRL}}$.
Thus, $\gamma = T_1/T_2 = t_{\mbox{PPRS}} / t_{\mbox{PPRL}} + \alpha$.

We determine the maximum value of  $\alpha$ for which the \emph{Screening-then-Linkage} has an advantage over the pure PPRL framework by solving $\gamma \geq 1$.
The results, shown in Table~\ref{tab:compare_framework}, indicate that the maximum $\alpha$ is larger than $0.98$ in all cases.
 This suggests that when nearly all parties are high-value participants for $P_0$, pure PPRL is more advantageous; otherwise \emph{Screening-then-Linkage} is better.

\begin{figure*}[t]
    \begin{minipage}[t]{.243\linewidth}
        \centering
        \includegraphics[width=\linewidth]{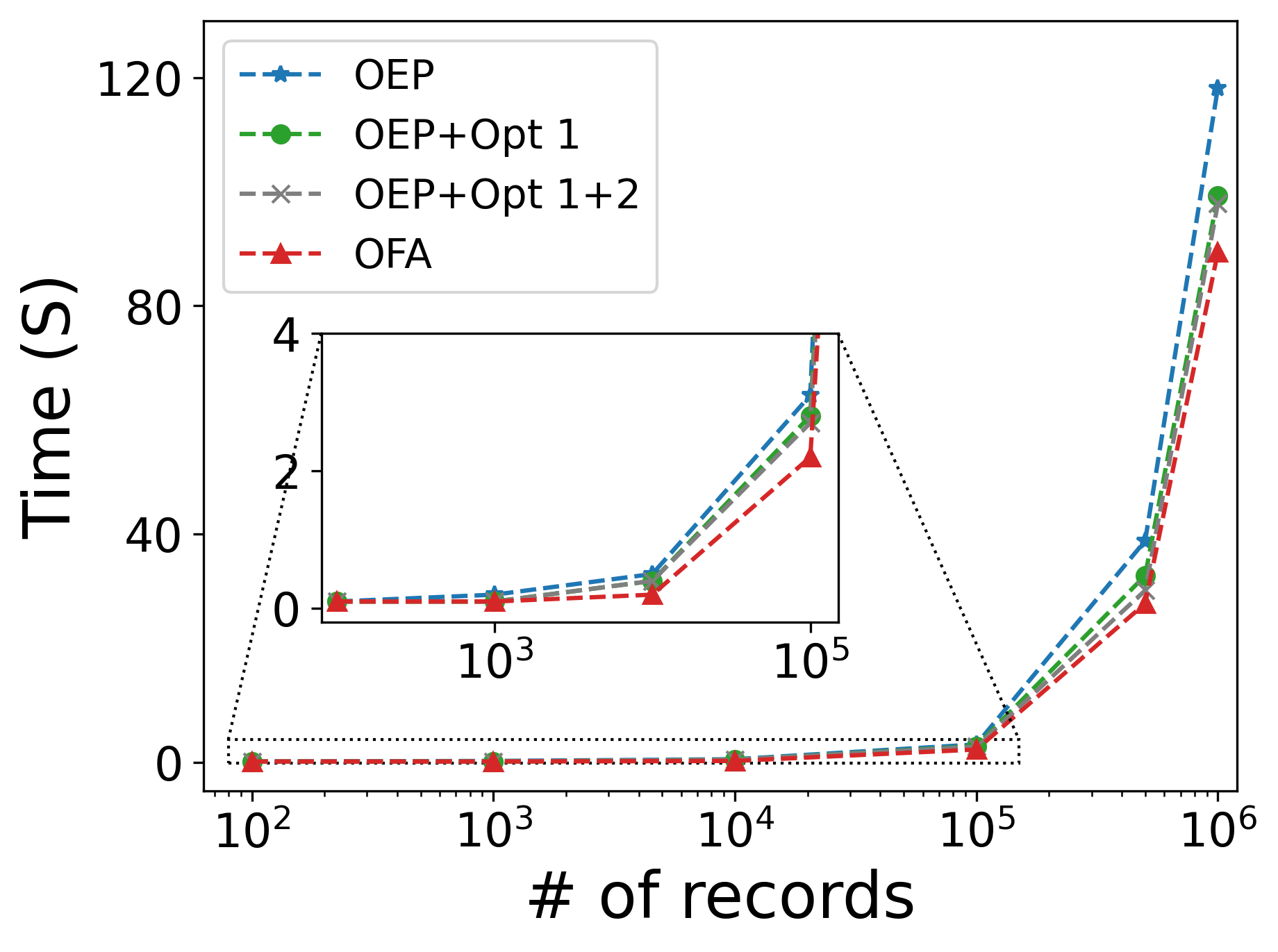}
        \subcaption{LAN, 1 thread}
        \label{fig:eva_OFA_lan_time_1}
    \end{minipage}
    \hfill
    \begin{minipage}[t]{0.243\linewidth}
        \centering
        \includegraphics[width=\linewidth]{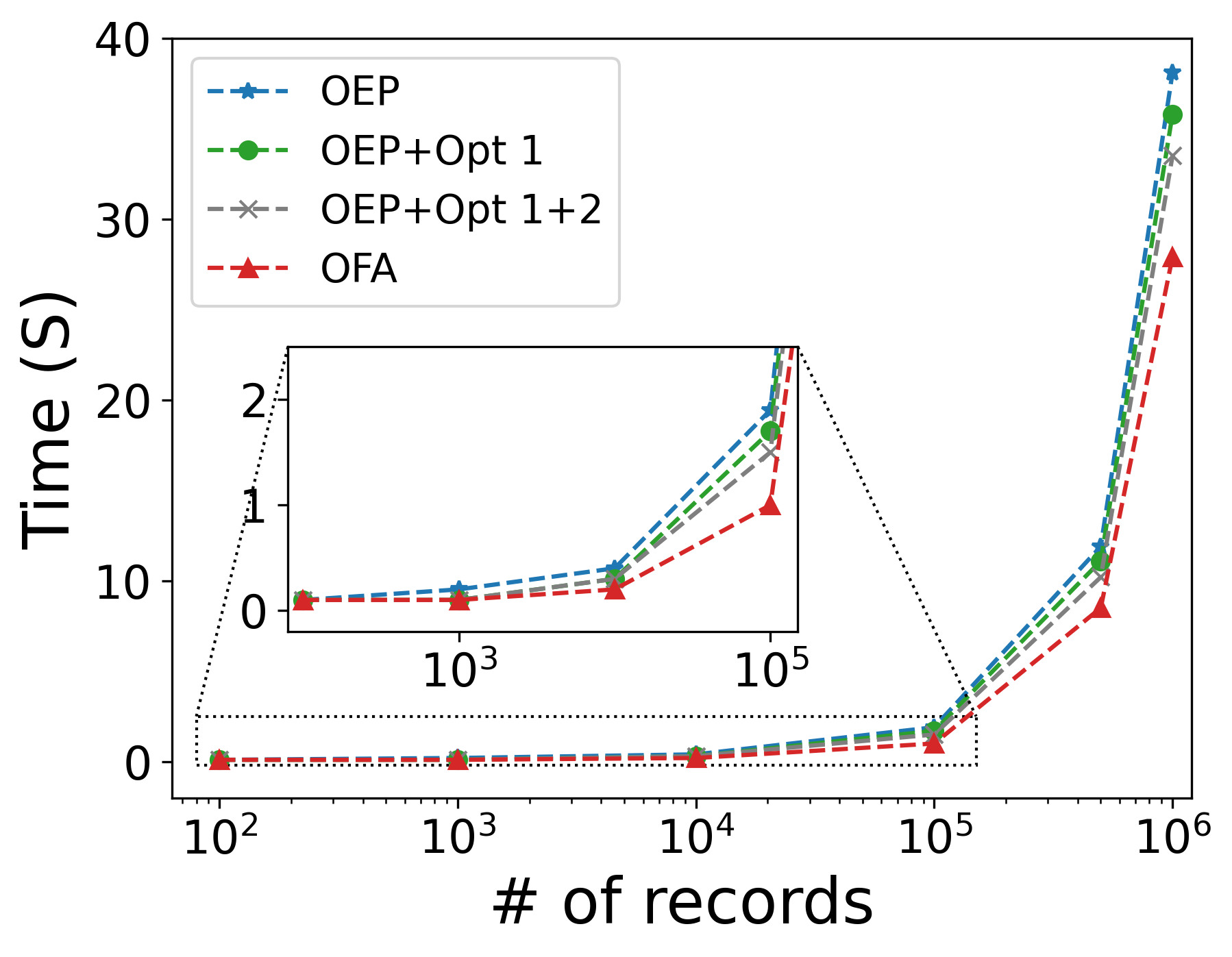}
        \subcaption{LAN, 4 threads}
        \label{fig:eav_OFA_lan_time_4}
    \end{minipage}  
    \hfill
    \begin{minipage}[t]{0.243\linewidth}
        \centering
        \includegraphics[width=\linewidth]{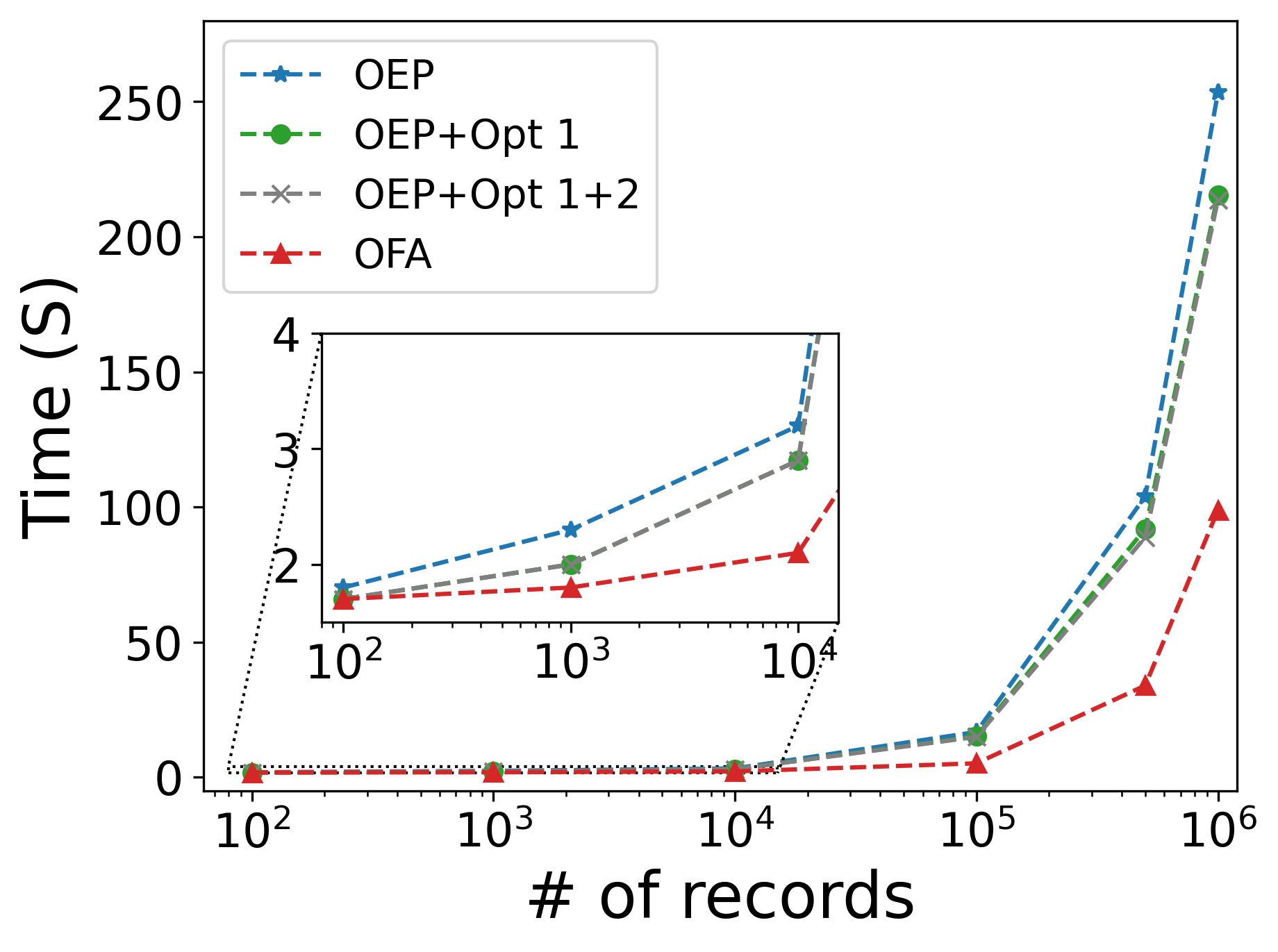}
        \subcaption{WAN, 1 thread}
        \label{fig:eav_OFA_wan_time_1}
    \end{minipage}  
    \hfill
    \begin{minipage}[t]{0.243\linewidth}
        \centering
        \includegraphics[width=\linewidth]{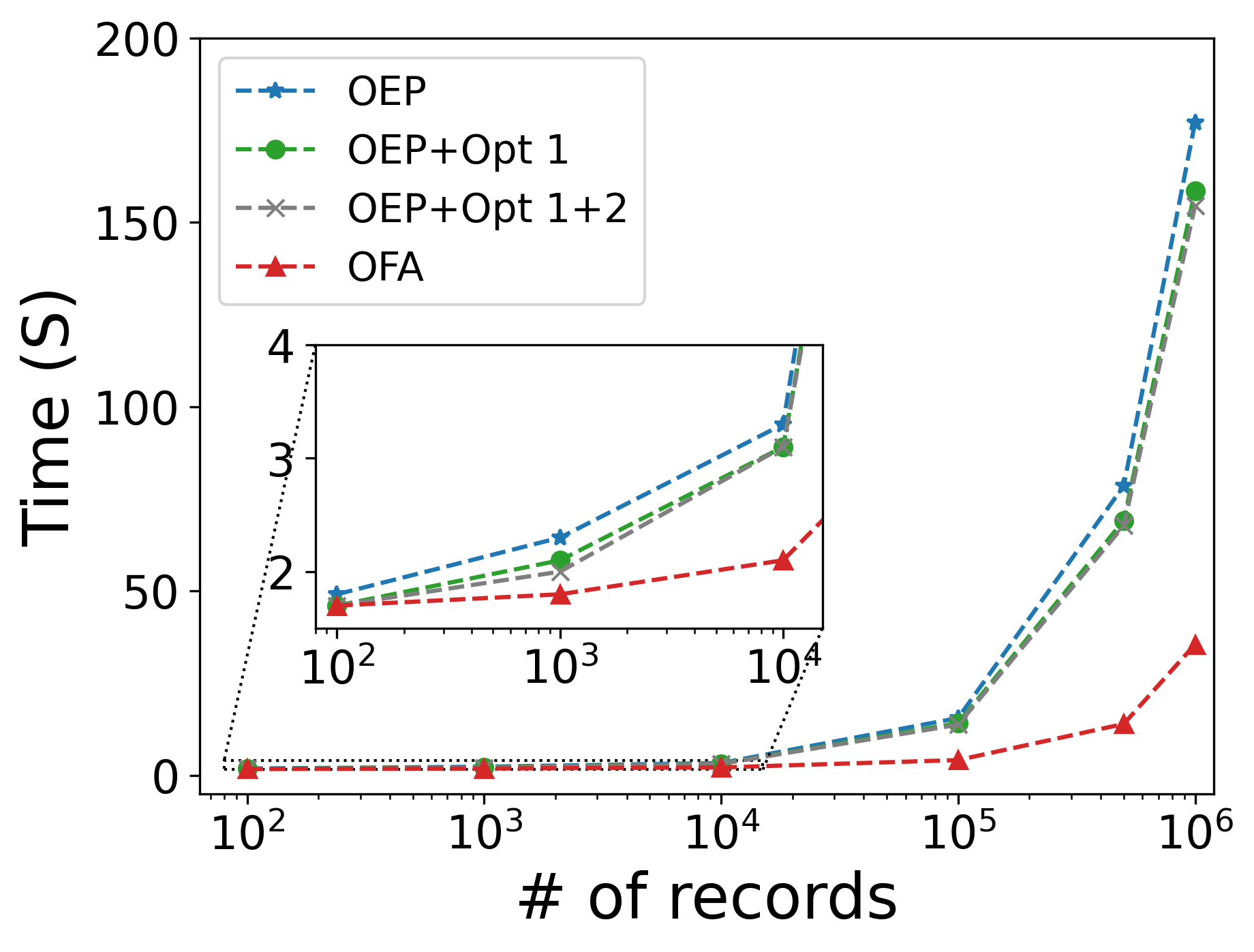}
        \subcaption{WAN, 4 threads}
        \label{fig:eav_OFA_wan_time_4}
    \end{minipage}  
    \caption{Running time (in seconds) of OEP, optimization 1, optimization 1+2, and OFA with 1 or 4 threads.}
    \label{fig:eav_OFA_time}
\end{figure*}


\begin{table}[t]
    \centering
    \caption{Communication cost (MB) of OEP, OEP with optimization 1, OEP with optimization 1+2 and OFA}
    \begin{tabular}{|c|c|c|c|c|c|c|}
        \hline 
        \# of records & 100 & 1K & 10K & 100K & 500K & 1M  \\ 
        \hline
        OEP & 0.3 & 1.1 & 11.6 & 136.8 & 799.7 & 1676.9 \\
        \hline
        OEP+Opt 1 & 0.3 & 1.0 & 10.3 & 121.4 & 695.2 & 1459.7 \\
        \hline
        OEP+Opt 1+2 & 0.3 & 1.0 & 10.0 & 118.4 & 680.0 & 1429.1 \\
        \hline
        OFA & 0.2 & 0.3 & 0.9 & 8.4 & 46.7 & 97.8 \\
        \hline
    \end{tabular}
    \label{tab:eva_OFA_communication}
\end{table}

\subsection{Component Evaluation: OFA protocol}
\label{sec:eva_ofa}
We evaluate the performance of the most critical module: OFA and compare it to the OEP with 2 instances, $P_0$ and $P_1$. 
A variety of datasets are synthesized,
each containing a single attribute of random boolean shares.
The number of records in these datasets varies from 100 to 1M.
Additionally, we generate random extended permutation $\pi$ for $P_0$. 

\Cref{fig:eav_OFA_time} presents the runtime results for OEP, optimizations and the OFA protocols. 
In a LAN setting, the OFA protocol demonstrates a slight performance advantage over the OEP. 
However, the distinction becomes more pronounced in a WAN environment, where the OFA significantly outperforms the OEP. 
This improvement in performance is attributed to the reduced communication cost associated with the OFA, as detailed in \Cref{tab:eva_OFA_communication}. 
The efficiency gains become increasingly evident as the dataset size expands. For a dataset comprising 1M records, the runtime of OFA is $4\times$ faster than that of the OEP with 4 threads, the communication cost of the OFA is a mere 6\% of the OEP's, highlighting the substantial efficiency improvements achieved by the OFA protocol.
Moreover, optimization 1 and 2 under LAN contribute almost equally as they do under WAN, and optimization 3 contributes more under WAN.
It indicates that optimization 3 reduces the communication cost significantly, and the other two optimizations reduce both communication and computation costs.

\section{Related Work}

There are considerable works dedicated to achieving PPRL.
All-Pairwise Comparisons~\cite{he2017composing} is a straightforward and accurate approach. It involves secure comparison between each pair of records in the Cartesian product of the two databases. However, due to its quadratic time complexity, it becomes inefficient for large-scale datasets. 
As a result, researchers have introduced various solutions to strike a balance between security, effectiveness, and efficiency. 
These approaches involve three primary techniques, including Differential Privacy (DP), Bloom Filter (BF), and Multi-party Computation (MPC).


MPC-based approaches, while provably secure and accurate, are generally inefficient. Some works~\cite{wong2013privacy, essex2019secure} rely on computationally inefficient homomorphic encryption, while others~\cite{chen2018perfectly} are based on communicationally inefficient garbled circuits. Among them, the PSI-based approach~\cite{adir2022privacy} achieves the best efficiency and can support more than 1M records. However, it can only handle the exact match in schema-agnositc scenario. 
Adir~\etal~\cite{adir2022privacy} tries to solve the approximate match by directly running PSI on the elements extracted from LSH on each record.
However, this leaks the frequency of each q-gram, and can inadvertently leak information about the attributes of the records. 
Moreover, it cannot support the schema-aware mode.
In our work, we prevent such information leakage through the use of circuit-PSI.
Wei~\etal~\cite{wei2023cryptographically} leverages LSH for blocking and private join protocol for matching. It achieves the best performance in the MPC-based approach but still needs 1 hour for 40K records.
DP-based works~\cite{inan2010private, cao2015hybrid, kuzu2013efficient} could improve efficiency. 
However, He~\etal~\cite{he2017composing} demonstrate that these approaches may leak aggregate properties of the input datasets to both parties and propose a more secure solution based on the combination of DP and MPC. 
Rao~\etal~\cite{rao2019hybrid} achieves the best performance in the DP setting but relies on a trusted third party. 
BF-based works~\cite{schnell2009privacy, boyd2015application} are often efficient. However, they are vulnerable to various attacks~\cite{kuzu2011constraint, niedermeyer2014cryptanalysis, vidanage2019efficient, vidanage2020graph}. 
Moreover, the solutions~\cite{christen2020linking, ranbaduge2020securing, franke2021evaluation} 
lack formal security analysis. 
A recent work~\cite{armknecht2023strengthening} enhances the security of BF by adding a diffusion layer, providing a rigorous security analysis. But the complexity is $\mathcal{O}(n^2)$ which is hard to scale.
For further details, we direct readers to refer~\cite{vatsalan2013taxonomy, vatsalan2017privacy, gkoulalas2021modern}.


Some recent works are highly relevant to PPRS. 
SFour~\etal~\cite{khurram2020sfour} proposes using MPC-based sorting with sliding window comparison to enable the output of which records are linked.
However, the cryptographic primitives used are time-consuming, resulting in the entire system taking more than 1 hour for only 4096 records.
Along a similar line of work, we systematically define the PPRS framework. Unlike their approaches, our solution achieves $\mathcal{O}(n\log n)$ complexity based on optimized lightweight cryptographic building blocks, thereby supporting million-level records.


\section{Conclusion}
\label{sec:conclusion}
In this work, we point out the need for efficient collaboration value identification in the data federation market and introduce the \emph{Screening-then-Linkage} framework to address this. The superior performance of \sys\ depends on circuit-PSI and our proposed OFA protocol. With a thorough security analysis, \sys\ is demonstrated to leak no information beyond the data collaboration value. Our evaluations show that \sys\ only needs about 20 minutes for a 1M dataset, which is $165\times$ faster than the SOTA PPRL. In our future work, we aim to address more practical considerations related to the deployment in data markets, including: 
supporting more advanced approximate matching methods with circuit-PSI instead of expanding the records; 
implementing more efficient PPRS based on other privacy-preserving techniques such as DP and BF; 
analyzing the appropriate setting of parameters to avoid rare cases where the attribute of $P_0$ is linked, but the linked attributes of $P_1$ are from different records, leading to the failure of record screening;
integrating data pricing methods with our \emph{Screening-then-Linkage} framework and \etc

\newpage
\clearpage
\bibliographystyle{IEEEtran}
\bibliography{reference}

\end{document}